
\documentclass[twoside,leqno,11pt]{article}  
\usepackage{ltexpprt} 

\hyphenation{op-tical net-works semi-conduc-tor}

\usepackage{times}
\usepackage{xspace}
\usepackage{subfigure}
\usepackage{pdflscape}
\usepackage{xcolor}
\usepackage[font=small]{caption}
\usepackage{hyperref}
\usepackage{algorithmic}
\usepackage{algorithm}
\usepackage{color, colortbl}
\usepackage{enumerate}
\usepackage{amsmath}
\usepackage{multirow}
\usepackage{mdwlist}
\usepackage{hhline}
\usepackage{graphicx}
\usepackage{amsfonts}

\definecolor{Gray}{gray}{0.7}

\addtolength{\abovecaptionskip}{-1mm}
\setlength{\tabcolsep}{4pt}

\global\long\def\phihalf{{\vec{s_{0}}}}

\global\long\def\y{{\vec{y}}}
\newcommand{\bp}{\textsc{BP}\xspace}
\newcommand{\rwr}{\textsc{RWR}\xspace}
\newcommand{\fabp}{\textsc{FaBP}\xspace}
\global\long\def\matA{\mathbf{A}}
\global\long\def\matI{\mathbf{I}}
\global\long\def\matD{\mathbf{D}}
\global\long\def\matB{\mathbf{S}}
\global\long\def\matF{\mathbf{F}}
\global\long\def\matL{\mathbf{L}}
\global\long\def\unitv{\vec{e}}
\global\long\def\cBP{c'}
\global\long\def\ccBP{\epsilon}
\global\long\def\aa{\epsilon^2}

\newcommand{\bhalf}{ {\vec{s} }}
\global\long\def\hhalf{h_{h}}
\newcommand{\methodSlow}{\textsc{DeltaCon$_{0}$}\xspace}
\newcommand{\method}{\textsc{DeltaCon}\xspace}

\newcommand{\methodT}{\textbf{{\large \textsc{DeltaCon}}}\xspace}

\newcommand{\myscshape}[1]{\textbf{\scshape{#1}}}
\newcommand{\mypar}[1]{\textbf{#1.}}

\newcolumntype{y}{>{\columncolor{yellow!90!white}}c}
\newcommand{\wrong}[1]{\multicolumn{1}{y|}{#1}}
\newcommand{\nodeSet}{\mathcal{V}}
\newcommand{\edgeSet}{\mathcal{E}}
\newcommand{\n}{n}
\newcommand{\m}{m}

\newcommand{\hide}[1]{}
\newcommand{\bit}{\begin{itemize}}
\newcommand{\eit}{\end{itemize}}
\newcommand{\reminder}[1]{ {\color{red} {\bf *** #1} } }

\newcommand{\mkclean}{
    \renewcommand{\reminder}[1]{}
}

\newcommand{\matusita}{\textsc{RootED}\xspace}

  \newtheorem{observation}{Observation}

  \newtheorem{problem}{Problem}
  \newtheorem{thm}{Theorem}
  \newtheorem{intuition}{Intuition}

\mkclean
%

\title{\method: A Principled Massive-Graph Similarity Function}

\begin{document}
%
%
%
\author{Danai Koutra\thanks{Computer Science Department, Carnegie Mellon University.} \\danai@cs.cmu.edu
\and 
Joshua T. Vogelstein\thanks{Department of Statistical Science, Duke University.}\\jovo@stat.duke.edu
\and
Christos Faloutsos\footnotemark[1]\\christos@cs.cmu.edu
}
\date{}

%
\newenvironment{oneshot}[1]{\@begintheorem{#1}{\unskip}}{\@endtheorem}

\maketitle
%
%
%
%
\begin{abstract}

How much did a network change since yesterday? 
How different is the wiring between Bob's brain (a left-handed male)
and Alice's brain (a right-handed female)?
Graph similarity with known node correspondence, i.e. the detection of changes in the connectivity of graphs, arises in numerous settings.
In this work, we formally state the axioms and desired properties of the graph similarity functions, and evaluate when state-of-the-art methods fail to detect
crucial connectivity changes in graphs. We propose \method, a principled, intuitive, and scalable algorithm that assesses the similarity between two graphs
on the same nodes (e.g. employees of a company, customers of a mobile carrier). 
Experiments on various synthetic and real graphs showcase the advantages of our method over existing similarity measures. Finally, we employ \method
to real applications: (a) we classify people to groups of high and low creativity based on their brain connectivity graphs, and (b) do temporal anomaly detection in the who-emails-whom Enron graph. 


\hide{
\reminder{--BP/diffusion method                     
- the proposed method can detect subtle changes in small and large graphs
- scalable
- increasing the number of different edges between the compared graphs leads to decrease of the similarity measure (as expected)
- weights of edges are taken into account and are penalized more than other missing edges
-targeted vs random change}
}

\end{abstract}
\section{Introduction}
Graphs arise naturally in numerous situations; social, traffic, collaboration and computer networks, images, protein-protein interaction networks, brain connectivity graphs and web graphs are only a few examples. 
A problem that comes up often in all those settings is the following: how much do two graphs or networks differ in terms of connectivity?

Graph similarity (or comparison) is a core task for sense-making:
 abnormal changes in the network traffic may indicate a computer attack; 
differences of big extent in a who-calls-whom graph may reveal a national celebration, 
or a telecommunication problem. Besides, network similarity can 
give insights into behavioral patterns: is the Facebook message graph similar to the Facebook wall-to-wall graph? 
Tracking changes in networks over time, spotting anomalies and detecting events is a research direction that has attracted much interest (e.g., \cite{CaceresBG11}, \cite{NobleC03}, \cite{WangPT11}). 
%
%

\begin{figure}
\centering
\subfigure[Connectome: neural network of brain.]{\includegraphics[width=0.3\columnwidth]{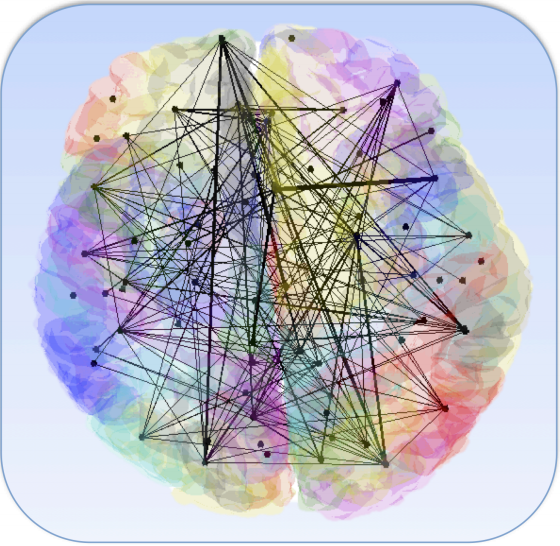}}
\subfigure[Dendogram representing the hierarchical clustering of the \method similarities between the 114 connectomes.]{\includegraphics[width=0.4\columnwidth]{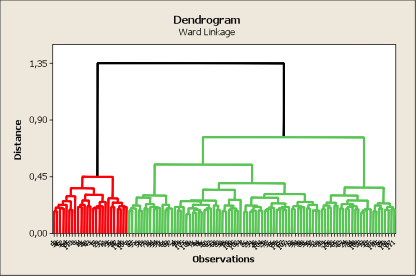}\label{fig:clustering}}
\caption{(a) Brain network (connectome). Different colors correspond to each of the 70 cortical regions, whose centers are depicted by vertices. Connections between the regions are shown by edges.
\method is used for clustering and classification.
(b) The connectomes are nicely classified in two big clusters by hierarchical clustering. The classification is based on the pairwise \method similarities between the 114 connectomes that we study. Elements in red correspond to high artistic score - thus, \method shows that artistic brains seem to have different wiring than the rest. }
\label{fig:brain}
\end{figure}

Long in the purview of researchers, graph similarity is a well-studied problem and several approaches have been proposed to solve variations of the problem. However, graph comparison still remains an open problem, while, with the passage of time, the list of requirements increases: the exponential growth of graphs, both in number and size, calls for methods that are not only accurate, but also scalable to graphs with billions of nodes.


In this paper, we address two main questions: How to compare two networks efficiently? How to evaluate their similarity score? 
Our main contributions are the following:
\begin{enumerate*}
\vspace{-0.2cm}
\item \emph{Axioms/Properties}: we formalize the axioms and properties that a similarity measure must conform to.
\item \emph{Algorithm}: we propose  \method for measuring connectivity differences between two graphs, and show that it is: (a)  \emph{principled}, conforming to all the axioms presented in Section~\ref{sec:proposed}, (b) \emph{intuitive}, giving similarity scores that agree with common sense and can be easily explained, and (c) \emph{scalable}, able to handle large-scale graphs.
\item \emph{Experiments}: we report experiments on synthetic and real datasets, and compare \method to six state-of-the-art methods that apply to our setting. 
\item \emph{Applications}: We use \method for real-world applications, such as temporal anomaly detection and clustering/classification. In Fig.~\ref{fig:brain}, \method is used for clustering brain graphs corresponding to 114 individuals;  
the two big clusters which differ in terms of connectivity correspond to people with high and low creativity. More details are given in Sec.~\ref{sec:applications}.
\end{enumerate*}

%
The paper is organized as follows: Section \ref{sec:proposed} presents the intuition behind our method, and the axioms and desired properties of a similarity measure; Sec.~\ref{sec:proposedDetails} has the proposed algorithms; experiments on synthetic and big real networks are in Sec.~\ref{sec:experiments}; Sec.~\ref{sec:applications} presents two real-world applications; the related work and the conclusions are in Sec.~\ref{sec:related} and ~\ref{sec:conclusions} respectively. 
Finally, Table \ref{tab:definitions} presents the major symbols we use in the paper and their definitions.

\section{Proposed Method: Intuition}
\label{sec:proposed}
\definecolor{Gray}{gray}{0.9}
\newcolumntype{g}{>{\columncolor{Gray}}c}
\begin{table}
\centering
\caption{Symbols and Definitions. Bold capital letters: matrices, lowercase letters with arrows: vectors, plain font: scalars.}
\label{tab:definitions}
\small{
\begin{tabular}{|c||l|} \hline
\multicolumn{1}{|g||}{\textbf{Symbol}} & \multicolumn{1}{g|}{\textbf{Description}} \\ \hline \hline
\multicolumn{1}{|g|}{$G$} & graph \\ \hline
\multicolumn{1}{|g|}{$\nodeSet, \n$} & set of nodes, number of nodes \\ \hline
\multicolumn{1}{|g|}{$\edgeSet, \m$} & set of edges, number of edges \\ \hline 
\multicolumn{1}{|g|}{$sim(G_{1},G_{2})$}& similarity between graphs $G_{1}$ and $G_{2}$\\\hline
\multicolumn{1}{|g|}{$d(G_{1},G_{2})$}& distance between graphs $G_{1}$ and $G_{2}$\\\hline \hline
\multicolumn{1}{|g|}{$\matI$} & $\n \times \n$ identity matrix  \\ \hline
\multicolumn{1}{|g|}{$\matA$} & $\n \times \n$ adjacency matrix with elements $a_{ij}$ \\ \hline
\multicolumn{1}{|g|}{$\matD$} & $\n \times \n$ diagonal degree matrix, $d_{ii} = \sum_{j}a_{ij}$ \\ \hline
\multicolumn{1}{|g|}{$\matL$} & $=\matD-\matA$ laplacian matrix  \\ \hline 
\multicolumn{1}{|g|}{$\matB$} & $\n \times \n$ matrix of final scores with elements $s_{ij}$ \\ \hline
\multicolumn{1}{|g|}{$\matB'$} & $\n \times g$ reduced matrix of final scores \\ \hline \hline
\multicolumn{1}{|g|}{$\unitv_{i}$} & $\n \times 1$ unit vector with 1 in the $i^{th}$ element \\ \hline
\multicolumn{1}{|g|}{$\bhalf_{0k}$} & $\n \times 1$ vector of seed scores for group $k$ \\ \hline
\multicolumn{1}{|g|}{$\bhalf_i$} & $\n \times 1$ vector of final affinity scores to node $i$ \\ \hline \hline
\multicolumn{1}{|g|}{$g$} & number of groups (node partitions) \\ \hline
\multicolumn{1}{|g|}{$\epsilon$} & $=1/(1+\max_i{(d_{ii}))} $ positive constant ($<1$) \\
\multicolumn{1}{|g|}{} & encoding the influence between neighbors \\ \hline \hline
 \multicolumn{1}{|g|}{\myscshape{DC$_0$}, \myscshape{DC}}&\methodSlow, \method\\\hline
\multicolumn{1}{|g|}{\myscshape{VEO} }&Vertex/Edge Overlap\\\hline
\multicolumn{1}{|g|}{\myscshape{GED} }& Graph Edit Distance \cite{Bunke06}\\\hline
\multicolumn{1}{|g|}{\myscshape{SS} }& Signature Similarity \cite{PapadimitriouGM08}\\\hline
\multicolumn{1}{|g|}{\myscshape{$\lambda$-d Adj. / Lap }}&$\lambda$-distance on $\matA$ / $\matL$ / \\
\multicolumn{1}{|g|}{\myscshape{N.L.}}&normalized $\matL$ \\\hline
\hline\end{tabular}
}
\end{table} 

%
How can we find the similarity in connectivity between two graphs, or more formally how can we solve the following problem?
\vspace{-0.2cm}
\begin{problem}{\underline{\method}nectivity}
\begin{description*}
 \item [Given:] (a) two graphs, $G_1(\nodeSet, \edgeSet_{1})$ and $G_2(\nodeSet, \edgeSet_{2})$ with the same node set\footnote{If the graphs have different, but overlapping, node sets $\nodeSet_1$ and $\nodeSet_2$, we assume that $\nodeSet = \nodeSet_1 \cup \nodeSet_2$, and the extra nodes are treated as singletons.}, $\nodeSet$, and different edge sets $\edgeSet_{1}$ and $\edgeSet_{2}$, and (b) the node correspondence.
\item [Find:] a similarity score, $sim(G_{1},G_{2}) \in [0,1]$, between the input graphs. Similarity score of value 0 means totally different graphs, while 1 means identical graphs.
\end{description*}
\end{problem}
\noindent The obvious way to solve this problem is by measuring the overlap of their edges.
Why does this often not work in practice? Consider the following example: according to the overlap method, the pairs of barbell graphs shown in Fig.~\ref{fig:BPtests} of p.~\pageref{fig:BPtests}, $(B10, mB10)$ and $(B10, mmB10)$, have the same similarity score. But, clearly, from the aspect of information flow, a missing edge from a clique $(mB10)$ does not play as important role in the graph connectivity as the missing ``bridge'' in $mmB10$. So, could we instead measure the differences in the 1-step away neighborhoods, 2-step away neighborhoods etc.? If yes, with what weight? It turns out (Intuition \ref{obs:neighboringInfluence}) that our method does exactly this in a principled way.

\subsection{Fundamental Concept}
The first conceptual step of our proposed method is 
to compute the pairwise node affinities in the first graph, 
and compare them with the ones in the second graph.
For notational compactness,
we store them in a $n \times n$ 
similarity matrix\footnote{In reality, we don't measure all the affinities (see Section \ref{sec:fast} for an efficient approximation).} $\matB$. 
The $s_{ij}$ entry of the matrix  indicates the influence node $i$ has on node $j$. For example, in a who-knows-whom network, 
if node $i$ is, say, republican and if we assume homophily (i.e., neighbors are similar), 
how likely  is it  that node $j$ is also republican? 
Intuitively, node $i$ has more influence/affinity to node $j$ 
if there are many, short, heavily weighted paths from node $i$ to $j$. 

The second conceptual step is to measure the differences in the corresponding node affinity scores of the two graphs and report the result as their similarity score.

\subsection{How to measure node affinity?}
Pagerank, personalized Random Walks with Restarts (\rwr), lazy \rwr, and the ``electrical network analogy'' technique are only a few of the methods that compute node affinities. We could have used Personalized \rwr:
$
[\matI-(1-c)\matA\matD^{-1}]\bhalf_i = c\;\unitv_{i}, 
$
where $c$ is the probability of restarting the random walk from the initial node, $\unitv_i$ the starting (seed) indicator vector (all zeros except 1 at position $i$), and $\bhalf_i$ the unknown Personalized Pagerank column vector. Specifically, $s_{ij}$ is the affinity of node $j$ w.r.t. node $i$. 
For reasons that we explain next, we chose to use a more recent and principled method, the so-called
 Fast Belief Propagation (\fabp), 
 which is identical to Personalized \rwr under specific conditions (see Theorem \ref{thm:rwrfabp} in Appendix \ref{rwrfabp} of \cite{KoutraVF13appendix}).
%
We use a simplified form of it
(see Appendix \ref{sec:fabpToDeltacon} in \cite{KoutraVF13appendix})
given by:
\begin{equation}
\label{eq:fabp}
[\matI+\epsilon^{2}\matD-\epsilon\matA]\bhalf_i= \unitv_{i}
\end{equation}
where $\bhalf_i = [s_{i1}, ... s_{in}]^T$ is the column vector 
of final similarity/influence scores starting from the $i^{th}$ node, 
$\epsilon$ is a small constant capturing the influence between neighboring nodes, 
$\matI$ is the identity matrix,
$\matA$ is the adjacency matrix
and $\matD$ is the diagonal matrix with the degree of node $i$ as the $d_{ii}$ entry. 

\reminder{Danai, do you want to put eq3 in a box?}
An equivalent, more compact notation, is to use a matrix form,
and to stack all the $\bhalf_i$ vectors ($i=1, \ldots, n$)
into 
the $n \times n$ matrix $\matB$.
We can easily prove that
\begin{equation}
\label{eq:Smatrix}
\boxed{\matB = [s_{ij}] = [\matI+\epsilon^{2}\matD-\epsilon\matA]^{-1}}.
\end{equation}


\subsection{Why use Belief Propagation?}
The reasons  we choose \bp and its fast approximation
with Eq.~(\ref{eq:Smatrix}) are: (a) it is based on sound theoretical background (maximum likelihood estimation on marginals), 
(b) it is fast (linear on the number of edges), and (c) it 
agrees with intuition, taking into account
not only direct neighbors, but also 2-, 3- and $k$-step-away
neighbors, with decreasing weight.
We elaborate on the last reason, next:
\begin{intuition}{[\textbf{Attenuating Neighboring Influence}]} \\
By temporarily ignoring the term $\epsilon^2 \matD$ in  (\ref{eq:Smatrix}), 
we can expand the matrix inversion
and approximate the $n \times n$ matrix of pairwise affinities, $\matB$, as
\begin{equation*}
\matB \approx [\matI - \epsilon \matA]^{-1} 
     \approx \matI + \epsilon \matA + \epsilon^2 \matA^2 + \ldots.
\label{obs:neighboringInfluence}
\end{equation*}
\end{intuition}
As we said, our method captures the differences in the 1-step, 2-step, 3-step etc. neighborhoods in a weighted way; differences in long paths have smaller effect on the computation of the similarity measure than differences in short paths. Recall that $\epsilon < 1$, and 
that $\matA^k$ has information about the $k$-step paths. Notice that this is just the intuition behind our method; we do not use this simplified formula to find matrix $\matB$.

\subsection{Which properties should a similarity measure satisfy?}

Let $G_1(\nodeSet, \edgeSet_{1})$ and $G_2(\nodeSet, \edgeSet_{2})$ be two graphs, and $sim(G_1, G_2) \in [0,1]$ denote their similarity score. Then, we want the similarity measure to obey the following axioms:
\begin{itemize*}
\item[A1.] \emph{Identity property}: $sim(G_1, G_1) = 1$
\item[A2.] \emph{Symmetric property}: $sim(G_1, G_2) = sim(G_2, G_1)$
\item[A3.] \emph{Zero property}: $sim(G_1, G_2)  \to 0$ for $\n \to \infty$, where $G_1$ is the clique graph ($K_{\n}$), and $G_2$ is the empty graph (i.e., the edge sets are complementary).
\end{itemize*}

\noindent Moreover, the measure must be: 
\paragraph*{\textbf{(a) intuitive.}} It should satisfy the following desired properties:
\begin{enumerate*}
 \item[P1.] [\emph{Edge Importance}] Changes that create disconnected components should be penalized more than changes that maintain the connectivity properties of the graphs.
 \item[P2.] [\emph{Weight Awareness}] In weighted graphs, the bigger the weight of the removed edge is, the greater the impact on the similarity measure should be.
 \item[P3.] [\emph{Edge-``Submodularity''}] A specific change is more important in a graph with few edges than in a much denser, but equally sized graph.
 \item[P4.] [\emph{Focus Awareness}] Random changes in graphs are less important than targeted changes of the same extent.
 \end{enumerate*}
\paragraph*{\textbf{(b) scalable.}} The huge size of the generated graphs, as well as their abundance require a  similarity measure that is computed fast and handles graphs with billions of nodes. 

\section{Proposed Method: Details}
\label{sec:proposedDetails}
Now that we have described the high level ideas behind our method, we move on to the details.

\subsection{Algorithm Description}
\label{sec:slow}

 
Let the graphs we compare be $G_1(\nodeSet, \edgeSet_1)$ and $G_2(\nodeSet, \edgeSet_2)$.  If the graphs have different node sets, say $\nodeSet_{1}$ and $\nodeSet_{2}$, we assume that $\nodeSet = \nodeSet_{1} \cup \nodeSet_{2}$, where some nodes are disconnected.

As mentioned before, the main idea behind our proposed similarity algorithm is to compare the node affinities in the given graphs. The steps of our similarity method are:
\paragraph*{\bf Step 1.} By eq.~(\ref{eq:Smatrix}), we compute for each graph the $\n \times \n$ matrix of pairwise node affinity scores ($\matB_1$ and $\matB_2$ for graphs $G_1$ and $G_2$ respectively). 
\paragraph*{\bf Step 2.} Among the various distance and similarity measures (e.g., Euclidean distance (ED),  cosine similarity, correlation) found in the literature, we use the root euclidean distance (\matusita, a.k.a. Matusita distance)
\begin{equation}
\label{eq:matusita}
d =  \matusita(\matB_1, \matB_2) = \sqrt{\sum_{i=1}^{\n}\sum_{j=1}^{\n}{ (\sqrt{s_{1,ij}} - \sqrt{s_{2,ij}} })^2 }.
\end{equation}
We use the \matusita distance for the following reasons:
\vspace{-0.2cm}
\begin{enumerate*}
\item it is very similar to the Euclidean distance (ED), the only difference being the square root 
of the pairwise similarities ($s_{ij}$),
\item it usually gives better results, because it ``boosts'' the node affinities\footnote{The node affinities are in $[0,1]$, so the square root makes them bigger.} and, therefore, detects even small changes in the graphs (other distance measures, including ED, suffer from high similarity scores no matter how much the graphs differ), and  
\item satisfies the desired properties $P1$-$P4$. As discussed in the Appendix \ref{sec:properties} of \cite{KoutraVF13appendix}, at least $P1$ is not satisfied by the ED.
\end{enumerate*}
%
\paragraph*{\bf Step 3.} For interpretability,
we convert the distance ($d$) to similarity measure ($sim$) via the formula $sim = \frac{1}{1+d}$. 
The result is bounded to the interval [0,1], as opposed to being unbounded [0,$\infty$).
Notice that the distance-to-similarity transformation
does {\em not} change the ranking of results in a nearest-neighbor query.

The straightforward algorithm, \methodSlow (Algorithm \ref{alg:slow}), is to compute all the $\n^{2}$ affinity scores of matrix $\matB$ by simply using equation (\ref{eq:Smatrix}). We can do the inversion using the Power Method or any other efficient method.
\begin{algorithm}
\caption{\methodSlow}
\label{alg:slow}
\begin{algorithmic}
\STATE INPUT: edge files of $G_1(\nodeSet, \edgeSet_1)$ and $G_2(\nodeSet, \edgeSet_2)$ 
\STATE // $\nodeSet = \nodeSet_{1} \cup \nodeSet_{2}$, if $\nodeSet_{1}$ and $\nodeSet_{2}$ are the graphs' node sets
\vspace{0.1cm}
\STATE $\mathbf{\matB_{1}}=[\matI+\epsilon^{2}\matD_1-\epsilon\matA_1]^{-1}$ \hfill // $s_{1,ij}$: affinity/influence of
\STATE $\mathbf{\matB_{2}}=[\matI+\epsilon^{2}\matD_2-\epsilon\matA_2]^{-1}$ \hfill //node $i$ to node $j$ in  $G_1$
\STATE $d(G_{1}, G_{2})= $\matusita($\matB_1, \matB_2$) 
\RETURN $sim(G_1, G_2) = \frac{1}{1+ d }$
\end{algorithmic}
\end{algorithm}	
\vspace{-0.2cm}
\subsection{Scalability Analysis}
\label{sec:fast}
\methodSlow 
satisfies all the properties in Section \ref{sec:proposed}, 
but it is quadratic ($\n^2$ affinity scores $s_{ij}$ - using power method for the inversion of sparse matrix)
and thus not scalable.
We present a faster, linear algorithm, \method (Algorithm \ref{alg:fast}), which approximates \methodSlow and differs in the first step. We still want each node to become a seed exactly once in order to find the affinities of the rest of the nodes to it; but, here, we have multiple seeds at once, instead of having one seed at a time.
The idea is to randomly divide our node-set into $g$ groups, and compute
the affinity score of each node $i$ to group $k$, thus requiring
only $\n \times g$ scores, which are stored in the $\n \times g$ matrix $\matB'$ ($g \ll n$).
Intuitively, instead of using the $\n \times \n$ affinity matrix $\matB$,
we add up the scores of the columns that correspond to the nodes of a group, 
and obtain the $n \times g$ matrix $\matB'$ ($g \ll n$).
The score $s'_{ik}$ is the affinity of node $i$ to the $k^{th}$ {\em group} of nodes ($k=1,\ldots,g$).
\begin{lemma}
\label{lem:reduced}
The time complexity of computing the reduced affinity matrix, $\matB'$, is linear on the number of edges.
\end{lemma}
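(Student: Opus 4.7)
The plan is to exhibit an explicit iterative algorithm that computes $\matB'$ and bound its running time. The key observation is that $\matB'$ is defined implicitly as the solution of the linear system
\begin{equation*}
[\matI + \epsilon^{2}\matD - \epsilon\matA]\,\matB' \;=\; \matB'_0,
\end{equation*}
where $\matB'_0$ is the $\n \times g$ matrix whose $k$-th column is the seed indicator vector $\bhalf_{0k}$ for the $k$-th group. So computing $\matB'$ reduces to solving $g$ independent sparse linear systems (one per group), each of the form $[\matI + \epsilon^{2}\matD - \epsilon\matA]\,\x = \bhalf_{0k}$.

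First I would rewrite each such system in fixed-point form $\x = \bhalf_{0k} + (\epsilon\matA - \epsilon^{2}\matD)\x$ and apply the standard iteration $\x^{(t+1)} = \bhalf_{0k} + (\epsilon\matA - \epsilon^{2}\matD)\,\x^{(t)}$, equivalent to the Power Method / Jacobi-style scheme used by \fabp. Each iteration is a sparse matrix-vector product against a matrix with $O(\n + \m) = O(\m)$ nonzeros (assuming no fully isolated components dominate; otherwise the bound is $O(\n + \m)$, which is the same up to treating singletons as edges). Hence one iteration costs $O(\m)$, and one full solve costs $O(T\cdot\m)$ where $T$ is the iteration count needed for the desired accuracy.

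Next I would argue that $T$ is a constant independent of $\n$. The choice $\epsilon = 1/(1 + \max_i d_{ii})$ made in the paper is precisely the one that ensures the spectral radius of $\epsilon\matA - \epsilon^{2}\matD$ is strictly less than $1$ (this is exactly the condition under which the \fabp expansion in Intuition~1 converges). Thus the residual contracts by a constant factor per iteration, so a fixed target accuracy is reached in $T = O(1)$ iterations. Repeating the solve for each of the $g$ groups yields total cost $O(g \cdot T \cdot \m) = O(g\,\m)$, and since $g$ is a user-chosen constant with $g \ll \n$, this is $O(\m)$, i.e.\ linear in the number of edges.

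The main obstacle is the convergence-rate argument: I would need to be careful that the contraction factor is bounded away from $1$ uniformly in the graph, so that $T$ really is absorbed into the constant. This amounts to checking that $\|\epsilon\matA - \epsilon^{2}\matD\| < 1$ under the chosen $\epsilon$; a clean route is to note that $\epsilon\matA - \epsilon^{2}\matD = \epsilon\matD^{1/2}(\matD^{-1/2}\matA\matD^{-1/2} - \epsilon\matI)\matD^{1/2}$ (on the support of nonzero degrees) and use the fact that the normalized adjacency has spectrum in $[-1,1]$, together with the specific bound on $\epsilon$, to produce an explicit gap. All remaining steps (sparse matvec cost, reduction to $g$ systems) are routine.
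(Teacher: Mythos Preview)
Your approach is the same as the paper's: reduce computing $\matB'$ to solving the block system $[\matI+\epsilon^{2}\matD-\epsilon\matA]\,\matB' = [\bhalf_{01}\;\ldots\;\bhalf_{0g}]$ and invoke the Power Method column by column. The paper's proof is in fact just that single line---it writes down the system and implicitly defers the $O(\m)$ per-solve cost to the \fabp reference, treating the iteration count as a constant without further argument.

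Your concern about the contraction factor being bounded away from $1$ uniformly in the graph is legitimate, and the paper simply does not address it. One caution on your proposed fix: the normalized-adjacency factorization $\epsilon\matA-\epsilon^{2}\matD=\epsilon\matD^{1/2}(\matD^{-1/2}\matA\matD^{-1/2}-\epsilon\matI)\matD^{1/2}$ does not directly give a spectral-norm bound, because conjugation by $\matD^{1/2}$ is not norm-preserving. A row-sum (Gershgorin) argument is cleaner here and yields $\|\epsilon\matA-\epsilon^{2}\matD\|_{\infty}=\epsilon d_{\max}(1+\epsilon)<1$ for the paper's choice of $\epsilon$, but that gap still closes as $d_{\max}$ grows---so the honest statement, as in the paper, is $O(g\cdot T\cdot\m)$ with $T$ treated as a fixed-accuracy constant.
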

\begin{proof}
We can compute the $\n \times g$ ``skinny'' matrix $\matB'$ quickly,
by  solving $[\matI+\epsilon^{2}\matD-\epsilon\matA]\matB'= [\bhalf_{01} \; \ldots \bhalf_{0g}],$ 
where $\bhalf_{0k}=\sum_{i \in group_{k}}\unitv_{i}$ is the membership $n \times 1$ vector for group $k$ (all 0's, except 1's  for members of the group). \hfill $\square$
\end{proof}
Thus, we compute $g$ final scores per node, which denote its affinity to every \emph{group} of seeds, instead of every seed node that we had in eq.~(\ref{eq:Smatrix}). 
With careful implementation, \method
is linear on the number of number of edges and groups $g$. As we show in section \ref{subsect:scalability},
it takes $\sim160$sec, on commodity hardware, for a 1.6-million-node graph.

Once we have the reduced affinity matrices $\matB'_1$ and $\matB'_2$ of the two graphs,
we use the \matusita, to find the similarity between the $\n \times g$ matrices 
of final scores, where $g \ll n$.


\begin{lemma}
The time complexity of \method, when applied in parallel to the input graphs, is linear on the number of edges in the graphs, i.e. O$(g \cdot max\{\m_{1}, \m_{2} \})$.
\label{lem:complexity}
\end{lemma}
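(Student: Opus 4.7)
The plan is to decompose \method into its three constituent steps and bound the cost of each, then observe that the dominant term is the one already handled by Lemma~\ref{lem:reduced}. Running the two graphs in parallel then collapses the two graph-level costs into a single max.

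First, by Lemma~\ref{lem:reduced}, computing the reduced affinity matrix $\matB'$ of a graph with $m$ edges takes time linear in $m$ and $g$. More concretely, I would note that the matrix $\matI + \epsilon^2 \matD - \epsilon \matA$ has the same sparsity pattern as $\matA$ plus the diagonal, so a single sparse matrix–vector product costs $O(m + n)$. An iterative solver (e.g.\ Power Method / Jacobi) applied to each of the $g$ right-hand sides $\bhalf_{0k}$ therefore costs $O(g(m+n))$ per graph, and in the regime where isolated nodes are subsumed by the $O(m)$ term this is $O(gm)$. This is Step~1 of Algorithm~\ref{alg:fast}.

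Step~2 is the evaluation of $\matusita(\matB'_1,\matB'_2)$ on two $n \times g$ matrices. From the formula in~(\ref{eq:matusita}), this is a single pass over $ng$ entries, each costing $O(1)$ (a square root, a subtraction, a square), followed by one final square root. Hence Step~2 costs $O(ng)$, which is dominated by the $O(gm)$ of Step~1 once isolated vertices are neglected. Step~3 converts the distance to a similarity via $sim = 1/(1+d)$ in $O(1)$. Summing across the three steps yields a per-graph cost of $O(g\,m)$.

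Finally, since \method processes $G_1$ and $G_2$ independently, when the two computations are carried out in parallel the wall-clock time is the larger of the two single-graph costs, namely $O(g \cdot \max\{m_1, m_2\})$, as claimed. I do not anticipate a real obstacle here: the heart of the argument — that a sparse linear solve with $g$ right-hand sides is $O(gm)$ — has already been done in Lemma~\ref{lem:reduced}; the remainder is bookkeeping to verify that Steps~2 and~3 do not dominate.
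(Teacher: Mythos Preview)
Your proposal is correct and follows essentially the same approach as the paper: decompose into the affinity-computation step (linear in $m$ via the Power Method, invoked $g$ times), the \matusita\ step ($O(gn)$), and the final conversion ($O(1)$), then absorb the $O(n)$ terms into $O(m)$ and invoke parallelism to replace the sum $m_1+m_2$ by $\max\{m_1,m_2\}$. The paper additionally mentions the $O(n)$ cost of the random node partitioning, which you omit, but this is a harmless omission since it is likewise dominated.
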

\begin{proof}
Based on lemma \ref{lem:reduced}. See Appendix \ref{sec:details} in \cite{KoutraVF13appendix}. \hfill $\square$
\end{proof}

\begin{thm}
\method's similarity score between any two graphs $G_{1}$, $G_{2}$ upper bounds the actual \methodSlow's similarity score, i.e. $sim_{DC-0}(G_{1}, G_{2}) \le sim_{DC}(G_{1}, G_{2})$.
\label{thm:upperBound}
\end{thm}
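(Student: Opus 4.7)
The plan is to reformulate the claim as a distance inequality and reduce it to a single elementary Cauchy--Schwarz inequality applied block-wise. Since $sim = 1/(1+d)$ is a strictly decreasing function of the \matusita{} distance $d$, the claim $sim_{DC\text{-}0} \le sim_{DC}$ is equivalent to $d_{DC\text{-}0} \ge d_{DC}$, so I will work with the squared distances throughout.

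First I would make the relation between the two affinity representations explicit. By Lemma~\ref{lem:reduced} and eq.~(\ref{eq:Smatrix}), the columns of $\matB'$ are obtained by summing columns of $\matB$ within each group: $s'_{ik} = \sum_{j \in group_k} s_{ij}$, and the groups partition $\nodeSet$. Hence for any fixed row $i$ and group $k$, I can set $a_j := s_{1,ij}$ and $b_j := s_{2,ij}$ for $j \in group_k$, so that $s'_{1,ik} = \sum_{j \in group_k} a_j$ and $s'_{2,ik} = \sum_{j \in group_k} b_j$.

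The heart of the argument is the following pointwise inequality, which I would state and prove as a lemma: for any non-negative reals $a_1,\dots,a_t$ and $b_1,\dots,b_t$,
\begin{equation*}
\Bigl(\sqrt{\textstyle\sum_j a_j} - \sqrt{\textstyle\sum_j b_j}\Bigr)^2 \;\le\; \sum_{j=1}^t \bigl(\sqrt{a_j}-\sqrt{b_j}\bigr)^2.
\end{equation*}
Expanding both sides, this reduces to $\sqrt{(\sum_j a_j)(\sum_j b_j)} \ge \sum_j \sqrt{a_j b_j}$, which is exactly the Cauchy--Schwarz inequality applied to the vectors $(\sqrt{a_j})_j$ and $(\sqrt{b_j})_j$. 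Note that the affinities $s_{ij}$ are non-negative (they can be viewed as truncated sums $\matI + \epsilon\matA + \epsilon^2\matA^2 + \dots$ per the Intuition), so the hypothesis holds.

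Finally, I would apply this lemma to each pair $(i,k)$ and sum over $i = 1,\dots,n$ and $k = 1,\dots,g$. Because $\{group_k\}_{k=1}^g$ partitions $\{1,\dots,n\}$, the double sum on the right collapses to a sum over all $(i,j)$:
\begin{equation*}
d_{DC}^{\,2} \;=\; \sum_{i,k}\bigl(\sqrt{s'_{1,ik}}-\sqrt{s'_{2,ik}}\bigr)^2 \;\le\; \sum_{i}\sum_{k}\sum_{j \in group_k}\bigl(\sqrt{s_{1,ij}}-\sqrt{s_{2,ij}}\bigr)^2 \;=\; d_{DC\text{-}0}^{\,2}.
\end{equation*}
Taking square roots and applying $x \mapsto 1/(1+x)$ yields $sim_{DC\text{-}0} \le sim_{DC}$, as required. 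The main obstacle is really only conceptual, namely recognizing that grouping columns of $\matB$ before applying the Matusita distance can only contract distances; the algebra, once Cauchy--Schwarz is identified, is routine. A minor item to verify is that the lemma's non-negativity hypothesis genuinely holds for $s_{ij}$, which can be argued either from the Neumann-series expansion under $\epsilon < 1/(1+\max_i d_{ii})$ or from the probabilistic/\rwr{} interpretation established in Appendix~\ref{rwrfabp}.
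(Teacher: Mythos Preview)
Your proof is correct and follows essentially the same route as the paper's: reduce to the squared-distance inequality, use the column-summing identity for group affinities, and apply Cauchy--Schwarz to obtain $\sum_j\sqrt{a_jb_j}\le\sqrt{(\sum_j a_j)(\sum_j b_j)}$ block-wise. Two minor notes: the column-summing relation you invoke is the content of Lemma~\ref{lem:lincomb} rather than Lemma~\ref{lem:reduced}, and your explicit verification of non-negativity of the $s_{ij}$ is a welcome addition that the paper's proof leaves implicit.
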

\begin{proof}
Intuitively, grouping nodes blurs the influence information and makes the nodes seem more similar than originally. For more details, see Appendix \ref{sec:details} of \cite{KoutraVF13appendix}. \hfill $\square$
\end{proof}


In the following section we show that \method (which includes \methodSlow as a special case for $g=n$) satisfies the axioms and properties, while in the Appendix (\ref{sec:axioms} and \ref{sec:properties} in \cite{KoutraVF13appendix}) we provide the proofs.

\begin{algorithm}
\caption{\method} 
\label{alg:fast}
\begin{algorithmic}
\STATE INPUT: edge files of $G_1(\nodeSet, \edgeSet_1)$ and $G_2(\nodeSet, \edgeSet_2)$ and
\STATE $\;\;\;\;\;\;\;\;\;\;\;$      $g$ (groups: \# of node partitions)
\vspace{0.2cm}
\STATE $\{\nodeSet_{j}\}_{j=1}^{g}$ = random\_partition($\nodeSet, g$) \hfill //{$g$ groups}
\STATE // estimate affinity vector of nodes $i=1,\ldots,n$ to group $k$ 
\FOR {$k = 1 \to g$}
\STATE $\phihalf_{k} = \sum_{i \in\nodeSet_{k}}\unitv_{i}$
\STATE solve $[\matI+\epsilon^{2}\matD_1-\epsilon\matA_1]\bhalf_{1k}'={\phihalf_{k}}$
\STATE solve $[\matI+\epsilon^{2}\matD_2-\epsilon\matA_2]\bhalf_{2k}'={\phihalf_{k}}$
\ENDFOR \\
\STATE  $\matB'_{1} = [ \bhalf_{11}' \;  \bhalf_{12}' \; \ldots \; \bhalf_{1g}' ]; \;\; \matB'_{2} = [ \bhalf_{21}' \; \bhalf_{22}' \;  \ldots  \; \bhalf_{2g}']$
\STATE // compare affinity matrices $\matB_1'$ and $\matB_2'$
\STATE $d(G_{1}, G_{2})= $\matusita$(\matB'_1, \matB'_2)$
\RETURN $sim(G_1, G_2) = \frac{1}{1+ d }$
\end{algorithmic}
\end{algorithm}

\vspace{-0.4cm}
\section{Experiments}
\label{sec:experiments}
We conduct several experiments on synthetic and real data (undirected, unweighted graphs, unless stated otherwise - see Table  \ref{tab:datasets}) to answer the following questions:
\begin{enumerate*}
\item[{\bf Q1.}] Does \method agree with our intuition and satisfy the axioms/properties? Where do other methods fail?
\item[{\bf Q2.}] Is \method scalable?
\end{enumerate*}


\begin{table}
\centering
\caption{Real and Synthetic Datasets}
\label{tab:datasets}
\small{ \begin{tabular}{|l||r|r|c| } \hline
\multicolumn{1}{|g||}{\textbf{Name}} & \multicolumn{1}{g}{\textbf{Nodes}} & \multicolumn{1}{|g}{\textbf{Edges}} & \multicolumn{1}{|g|}{\textbf{Description}}\\ \hline \hline
\multicolumn{1}{|g||}{\textbf{Brain Graphs}} & 70 & 800-1,208 & \scriptsize{connectome}\\ \hline
\multicolumn{1}{|g||}{\textbf{Enron Email} \cite{KlimtY04}} & 36,692 & 367,662 & \scriptsize{who-emails-whom}\\ \hline
\multicolumn{1}{|g||}{\textbf{Epinions} \cite{GuhaKRT04}} & 131,828 & 841,372 & \scriptsize{who-trusts-whom}\\ \hline
\multicolumn{1}{|g||}{\textbf{Email EU} \cite{LeskovecKF07}} & 265,214 & 420,045 & \scriptsize{who-sent-to-whom}\\ \hline
\multicolumn{1}{|g||}{\textbf{Web Google} \cite{snap_web}} & 875,714 & 5,105,039 & \scriptsize{site-to-site}\\ \hline 
\multicolumn{1}{|g||}{\textbf{AS skitter} \cite{LeskovecKF07}} &1,696,415 & 11,095,298 & \scriptsize{p2p links}\\ \hline \hline
\multicolumn{1}{|g||}{\textbf{Kronecker 1}} & 6,561 & 65,536 & \scriptsize{synthetic}\\ \hline
\multicolumn{1}{|g||}{\textbf{Kronecker 2}} & 19,683 & 262,144 & \scriptsize{synthetic}\\ \hline
\multicolumn{1}{|g||}{\textbf{Kronecker 3}} & 59,049 & 1,048,576 & \scriptsize{synthetic}\\ \hline
\multicolumn{1}{|g||}{\textbf{Kronecker 4}} & 177,147 & 4,194,304 & \scriptsize{synthetic}\\ \hline
\multicolumn{1}{|g||}{\textbf{Kronecker 5}} & 531,441 & 16,777,216 & \scriptsize{synthetic}\\ \hline
\multicolumn{1}{|g||}{\textbf{Kronecker 6}} & 1,594,323 & 67,108,864 & \scriptsize{synthetic}\\ \hline
\end{tabular}
}
\end{table}

\noindent The implementation is in Matlab and we ran the experiments on AMD Opteron Processor 854 @3GHz, RAM 32GB. \reminder{(RAM speed -- what about amstel??)} 

\subsection{Intuitiveness of \methodT.}
\label{sec:intuitive}

\begin{figure*}[tbh!]
	  \begin{minipage}{0.65\textwidth}
	    \centering
	    \includegraphics[width=\textwidth]{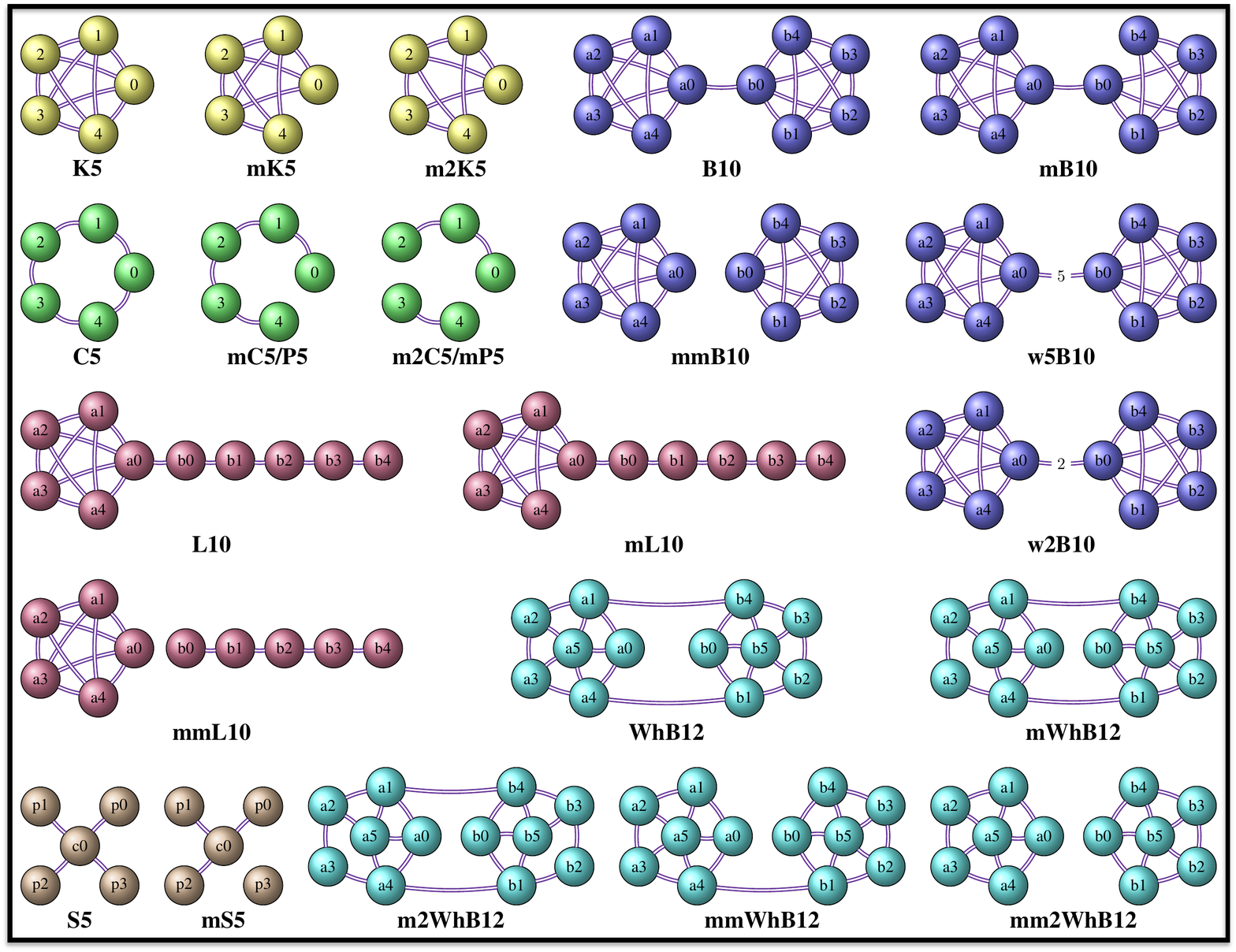}
	    \captionof{figure}{Small synthetic graphs  -- \emph{K: clique, C: cycle, P: path, S: star, B: barbell, L: lollipop, WhB: wheel-barbell}}
	    	    \label{fig:BPtests}
	  \end{minipage} \hfill
	  \begin{minipage}{0.32\textwidth}
	    \centering
	        
	\vspace{5.5cm}
	{\scriptsize\begin{tabular}{|c|c|}
	\hline
	\rowcolor{Gray}\textbf{Symbol}& \textbf{Meaning}\\\hline
	\multicolumn{1}{|g|}{$\mathbf{K_{n}}$}&clique of size $n$\\\hhline{--}
	\multicolumn{1}{|g|}{$\mathbf{P_{n}}$}&path of size $n$\\\hline
	\multicolumn{1}{|g|}{$\mathbf{C_{n}}$}&cycle of size $n$\\\hhline{--}
	\multicolumn{1}{|g|}{$\mathbf{S_{n}}$}&star of size $n$\\\hline
	\multicolumn{1}{|g|}{$\mathbf{L_{n}}$}&lollipop of size $n$\\\hhline{--}
	\multicolumn{1}{|g|}{$\mathbf{B_{n}}$}&barbell of size $n$\\\hline
	\multicolumn{1}{|g|}{$\mathbf{WhB_{n}}$}&wheel barbell of size $n$\\\hline
	\multicolumn{1}{|g|}{$\mathbf{m_{X}}$} & missing X edges\\\hhline{--}
	\multicolumn{1}{|g|}{$\mathbf{mm_{X}}$} & missing X ``bridge'' edges\\\hhline{--}
	\multicolumn{1}{|g|}{$\textbf{w}$} & weight of ``bridge'' edge\\\hhline{--}
	\end{tabular}
\label{tab:nameConventions}	

	}
		\vspace{0.1cm}

\captionof{table}{Name Conventions for small synthetic graphs. Missing number after the prefix implied $X=1$.}
	     \end{minipage}
	\end{figure*}


\begin{figure*}[tbh!]
\begin{minipage}{0.5\textwidth}
\centering
\captionof{table}{``Edge Importance'' (P1). Highlighted entries violate P1.
 }
\label{tab:edgeImportance}
\resizebox{\columnwidth}{!} {
\begin{tabular}{|c|c|c|||c|c|c|c||c|c|c|c|}
\hline \multicolumn{3}{|c|||}{} & & & & & \myscshape{GED} &   \myscshape{$\lambda$-d}& \myscshape{$\lambda$-d}&  \myscshape{$\lambda$-d} \\
\multicolumn{3}{|c|||}{ \multirow{-2}{*}{\myscshape{Graphs}}}&  \multirow{-2}{*}{ \myscshape{DC$_0$}} & \multirow{-2}{*}{ \myscshape{DC}} &\multirow{-2}{*}{ \myscshape{VEO}}&\multirow{-2}{*}{ \myscshape{SS}}& \myscshape{(XOR)} & \myscshape{Adj.}&  \myscshape{Lap.}&  \myscshape{N.L.} \\ \hline 
\multicolumn{1}{|c}{ \myscshape{A}}&\multicolumn{1}{|c}{ \myscshape{B}}&\multicolumn{1}{|c|||}{ \myscshape{C}}&  \multicolumn{4}{c||}{$\Delta{s} = sim(A,B) - sim(A,C)$}&   \multicolumn{4}{c|}{$\Delta{d} = d(A,C) - d(A,B)$} \\ \hline \hline  \hline
B10&mB10&mmB10&{\bf 0.07}&{\bf 0.04}&\wrong{0}&\wrong{$-10^{-5}$}&\wrong{0}&0.21&\wrong{-0.27}&2.14\\\hline
L10&mL10&mmL10&{\bf 0.04}&{\bf 0.02}&\wrong{0}&$10^{-5}$&\wrong{0}&\wrong{-0.30}&\wrong{-0.43}&\wrong{-8.23}\\\hline
WhB10&mWhB10&mmWhB10&{\bf 0.03}&{\bf 0.01}&\wrong{0}&\wrong{$-10^{-5}$}&\wrong{0}&0.22&0.18&\wrong{-0.41}\\\hline
WhB10&m2WhB10&mm2WhB10&{\bf 0.07}&{\bf 0.04}&\wrong{0}&\wrong{$-10^{-5}$}&\wrong{0}&0.59&\wrong{0.41}&0.87\\\hline
\end{tabular}
}
\centering
\vspace{0.5cm}
\setcounter{table}{5}
\captionof{table}{``Edge-Submodularity'' (P3). Highlighted entries violate P3.
}
\label{tab:submodularity}
\resizebox{\columnwidth}{!} {
\begin{tabular}{|c|c||c|c|||c|c|c|c||c|c|c|c|}
\hline \multicolumn{4}{|c|||}{} & & & & & \myscshape{GED} &   \myscshape{$\lambda$-d}& \myscshape{$\lambda$-d}&  \myscshape{$\lambda$-d} \\
\multicolumn{4}{|c|||}{ \multirow{-2}{*}{\myscshape{Graphs}}}&  \multirow{-2}{*}{ \myscshape{DC$_0$}}& \multirow{-2}{*}{ \myscshape{DC}}&\multirow{-2}{*}{ \myscshape{VEO}}&\multirow{-2}{*}{ \myscshape{SS}}&\myscshape{(XOR)}&   \myscshape{Adj.}&  \myscshape{Lap.}&  \myscshape{N.L.}\\ \hline 
\multicolumn{1}{|c}{ \myscshape{A}}&\multicolumn{1}{|c||}{ \myscshape{B}}&\multicolumn{1}{c|}{ \myscshape{C}}&\multicolumn{1}{|c|||}{ \myscshape{D}}&  \multicolumn{4}{c||}{$\Delta{s} = sim(A,B) - sim(C,D)$}&   \multicolumn{4}{c|}{$\Delta{d} = d(C,D) - d(A,B)$} \\ \hline \hline \hline
K5&mK5&C5&mC5&{\bf 0.03}&{\bf 0.03}&0.02&$10^{-5}$&\wrong{0}&\wrong{-0.24}&\wrong{-0.59}&\wrong{-7.77}\\\hline
C5&mC5&P5&mP5&{\bf 0.03}&{\bf 0.01}&0.01&\wrong{$-10^{-5}$}&\wrong{0}&\wrong{-0.55}&\wrong{-0.39}&\wrong{-0.20}\\\hline
P5&mP5&S5&mS5&{\bf 0.003}&{\bf 0.001}&\wrong{0}&\wrong{$-10^{-5}$}&\wrong{0}&\wrong{-0.07}&0.39&3.64\\\hline
K\tiny{100}&mK\tiny{100}&C\tiny{100}&mC\tiny{100}&{\bf 0.03}&{\bf 0.02}&0.002&$10^{-5}$&\wrong{0}&\wrong{-1.16}&\wrong{-1.69}&\wrong{-311}\\\hline
C\tiny{100}&mC\tiny{100}&P\tiny{100}&mP\tiny{100}&{ $\mathbf{10^{-4}}$}&{\bf 0.01}&$10^{-5}$&\wrong{$-10^{-5}$}&\wrong{0}&\wrong{-0.08}&\wrong{-0.06}&\wrong{-0.08}\\\hline
P\tiny{100}&mP\tiny{100}&S\tiny{100}&mS\tiny{100}&{\bf 0.05}&{\bf 0.03}&\wrong{0}&\wrong{0}&\wrong{0}&\wrong{-0.08}&1.16&196\\\hline
K\tiny{100}&m10K\tiny{100}&C\tiny{100}&m10C\tiny{100}&{\bf 0.10}&{\bf 0.08}&0.02&$10^{-5}$&\wrong{0}&\wrong{-3.48}&\wrong{-4.52}&\wrong{-1089}\\\hline
C\tiny{100}&m10C\tiny{100}&P\tiny{100}&m10P\tiny{100}&{\bf 0.001}&{\bf 0.001}&$10^{-5}$&\wrong{0}&\wrong{0}&\wrong{-0.03}&0.01&0.31\\\hline
P\tiny{100}&m10P\tiny{100}&S\tiny{100}&m10S\tiny{100}&{\bf 0.13}&{\bf 0.07}&\wrong{0}&\wrong{$-10^{-5}$}&\wrong{0}&\wrong{-0.18}&8.22&1873\\\hline
\end{tabular}
}
\end{minipage} \hfill
\begin{minipage}{0.5\textwidth}
\setcounter{table}{4}
%
\centering
\captionof{table}{``Weight Awareness'' (P2). Highlighted entries violate P2.
}
\label{tab:weightAwareness}
\resizebox{\columnwidth}{!} {
\begin{tabular}{|c|c||c|c|||c|c|c|c||c|c|c|c|}
\hline \multicolumn{4}{|c|||}{} & & & & & \myscshape{GED} &   \myscshape{$\lambda$-d}& \myscshape{$\lambda$-d}&  \myscshape{$\lambda$-d} \\
\multicolumn{4}{|c|||}{ \multirow{-2}{*}{\myscshape{Graphs}}}&  \multirow{-2}{*}{ \myscshape{DC$_0$}}& \multirow{-2}{*}{ \myscshape{DC}}&\multirow{-2}{*}{ \myscshape{VEO}}&\multirow{-2}{*}{ \myscshape{SS}}&\myscshape{(XOR)}&   \myscshape{Adj.}&  \myscshape{Lap.}&  \myscshape{N.L.} \\ \hline 
\multicolumn{1}{|c}{ \myscshape{A}}&\multicolumn{1}{|c||}{ \myscshape{B}}&\multicolumn{1}{|c}{ \myscshape{C}}&\multicolumn{1}{|c|||}{ \myscshape{D}}&  \multicolumn{4}{c||}{$\Delta{s} = sim(A,B) - sim(C,D)$}&   \multicolumn{4}{c|}{$\Delta{d} = d(C,D) - d(A,B)$} \\ \hline \hline \hline
B10&mB10&B10&w5B10&{\bf 0.09}&{\bf 0.08}&\wrong{-0.02}&$10^{-5}$&\wrong{-1}&3.67&5.61&84.44\\\hline
mmB10&B10&mmB10&w5B10&{\bf 0.10}&{\bf 0.10}&\wrong{0}&$10^{-4}$&\wrong{0}&4.57&7.60&95.61\\\hline
B10&mB10&w5B10&w2B10&{\bf 0.06}&{\bf 0.06}&\wrong{-0.02}&$10^{-5}$&\wrong{-1}&2.55&3.77&66.71\\\hline
w5B10&w2B10&w5B10&mmB10&{\bf 0.10}&{\bf 0.07}&0.02&$10^{-5}$&1&2.23&3.55&31.04\\\hline
w5B10&w2B10&w5B10&B10&{\bf 0.03}&{\bf 0.02}&\wrong{0}&$10^{-5}$&\wrong{0}&1.12&1.84&17.73\\\hline
\end{tabular}
}

\vspace{0.1cm}

	\captionof{figure}{``Focus-Awareness'' (P4).}
	\includegraphics[width=0.82\columnwidth]{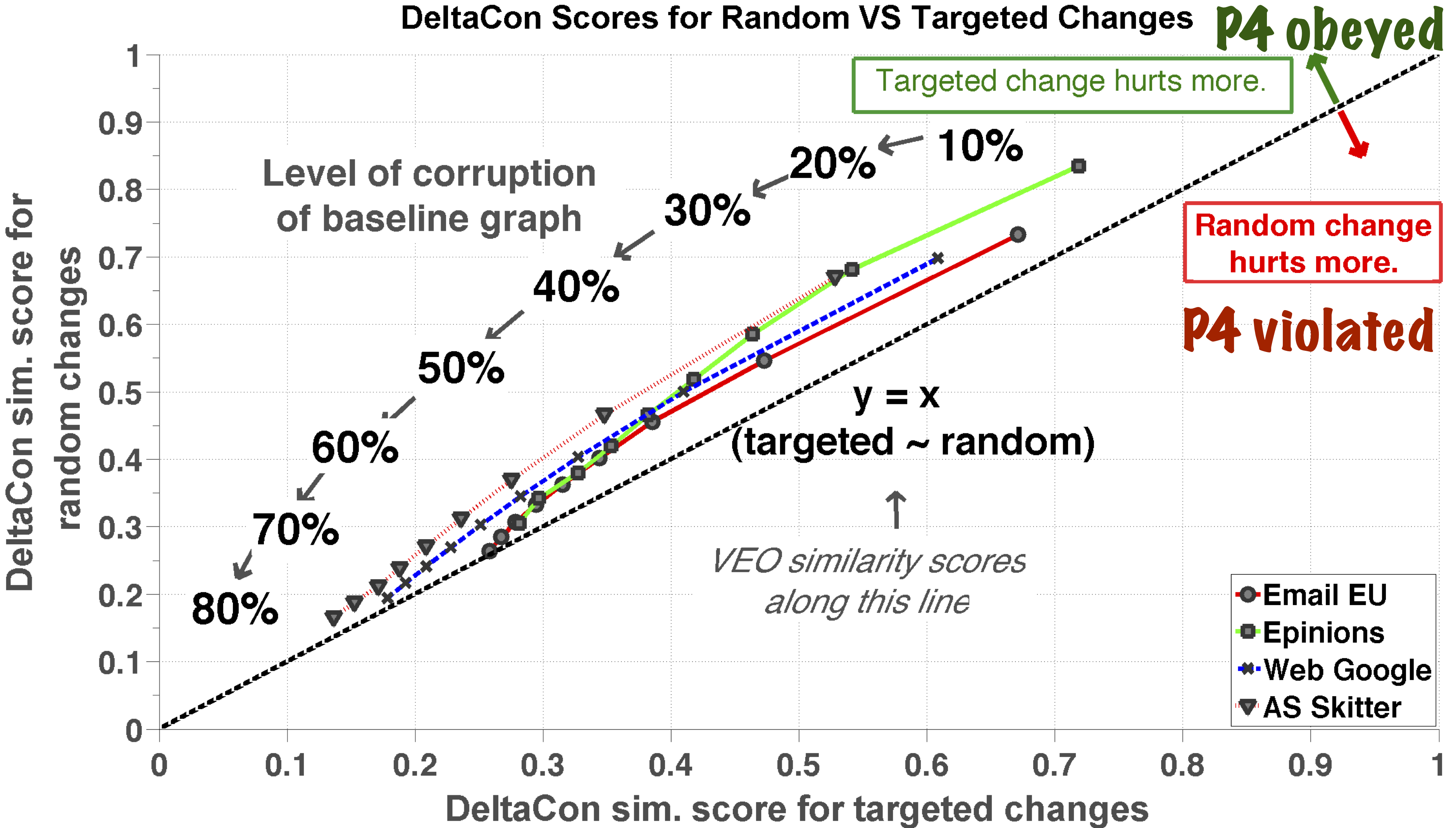}
	 \label{fig:real_sim_randVStarg}
	    	    \label{fig:real_sim_randVStarg}
\end{minipage}
\vspace{0.2cm}
\caption*{Tables 4-6/Figure 3: \methodSlow and \method (in bold) obey all the required properties (P1-P4). 
\emph{Tables 4-6}: Each row of the tables corresponds to a comparison between the similarities (or distances) of two pairs of graphs; pairs (A,B) and (A,C) for (P1); and pairs (A,B) and (C,D) for (P2) and (P3): Non-positive values of {\scriptsize $\Delta{s} = sim(A,B) - sim(C,D)$} and {\scriptsize $\Delta{d} = d(C,D) - d(A,B)$} - depending on whether the corresponding method computes similarity or distance - are highlighted and mean violation of the property of interest. 
\emph{Figure 3}: Targeted changes hurt more than random. Plot of \method similarity scores for random changes (y axis) vs. \method similarity scores for targeted changes (x axis) for 4 real-world networks. For each graph we create 8 ``corrupted'' versions with 10\% to 80\% fewer edges than the initial graphs. Notice that all the points are above the diagonal.}
\end{figure*}

\setcounter{table}{6}

To answer Q1, for the first 3 properties (P1-P3), we conduct experiments on small graphs of 5 to 100 nodes and classic topologies (cliques, stars, circles, paths, barbell and wheel-barbell graphs,  and ``lollipops'' shown in Fig.~\ref{fig:BPtests}), since people can argue about their similarities. For the name conventions see Table 3. For our method we used 5 groups ($g$), but the results are similar for other choices of the parameter. In addition to the synthetic graphs, for the last property (P4), we use real networks with up to 11 million edges (Table \ref{tab:datasets}).

We compare our method, \method, to the 6 best state-of-the-art similarity measures that apply to our setting:
\begin{enumerate*}
\item Vertex/Edge Overlap (VEO) \cite{PapadimitriouGM08}: For two graphs $G_{1}(\nodeSet_{1}, \edgeSet_{1})$ and $G_{2}(\nodeSet_{2}, \edgeSet_{2})$:
\begin{equation*}
sim_{VEO}(G_{1}, G_{2})=2\frac{|\edgeSet_{1} \cap \edgeSet_{2}|+|\nodeSet_{1} \cap \nodeSet_{2}|}{|\edgeSet_{1}| + |\edgeSet_{2}| + |\nodeSet_{1}| + |\nodeSet_{2}|}.
\label{eq:veo}
\end{equation*}
\item Graph Edit Distance (GED) \cite{Bunke06}: GED has quadratic complexity in general, so they \cite{Bunke06} consider the case where only insertions and deletions are allowed.
\begin{eqnarray*}
sim_{GED}(G_{1}, G_{2}) &=& |\nodeSet_{1}| + |\nodeSet_{2}| - 2 |\nodeSet_{1} \cap \nodeSet_{2}|\\
 &+& |\edgeSet_{1}| + |\edgeSet_{2}| - 2 |\edgeSet_{1} \cap \edgeSet_{2}|.
\label{eq:ged}
\end{eqnarray*}
For $\nodeSet_{1} = \nodeSet_{2}$ and unweighted graphs,~$sim_{GED}$ is equivalent to $\text{hamming distance}(\mathbf{A_1}, \mathbf{A_2})=sum(\mathbf{A_1}\,XOR\,\mathbf{A_2})$. 
\item Signature Similarity (SS) \cite{PapadimitriouGM08}: This is the best performing similarity measure studied in \cite{PapadimitriouGM08}. It is based on the SimHash algorithm (random projection based method).
\item The last 3 methods are variations of the well-studied spectral method ``$\lambda$-distance'' (\cite{Bunke06}, \cite{Peabody03}, \cite{WilsonZ08}). Let \{$\lambda_{1i}\}_{i=1}^{|\nodeSet_{1}|}$ and \{$\lambda_{2i}\}_{i=1}^{|\nodeSet_{2}|}$ be the eigenvalues of the matrices that represent $G_{1}$ and $G_{2}$. Then, $\lambda$-distance is given by
\vspace{-0.4cm}
\begin{equation*}
d_{\lambda}(G_{1}, G_{2}) = \sqrt{\sum_{i=1}^{k}{(\lambda_{1i}-\lambda_{2i})}^{2}},
\label{eq:ldist}
\end{equation*}
where $k$ is $max(|\nodeSet_{1}|, |\nodeSet_{2}|)$ (padding is required for the smallest vector of eigenvalues).
The variations of the method are based on three different matrix representations of the graphs: adjacency ($\lambda$-d Adj.), laplacian ($\lambda$-d Lap.) and normalized laplacian matrix ($\lambda$-d N.L.).
\end{enumerate*}

The results for the first 3 properties are presented in the form of tables \ref{tab:edgeImportance}-\ref{tab:submodularity}. For property P1 we compare the graphs (A,B) and (A,C) and report the difference between the pairwise similarities/distances of our proposed methods and the 6 state-of-the-art methods. We have arranged the pairs of graphs in such way that (A,B) are more similar than (A,C). Therefore, table entries that are non-positive mean that the corresponding method does not satisfy the property. Similarly, for properties P2 and P3, we compare the graphs (A,B) and (C,D) and report the difference in their pairwise similarity/distance scores.


\paragraph{P1. Edge Importance}:
\emph{``Edges whose removal creates disconnected components are more important than other edges whose absence does not affect the graph connectivity. The more important an edge is, the more it should affect the similarity or distance measure.''}

For this experiment we use the barbell, ``wheel barbell'' and ``lollipop'' graphs, since it is easy to argue about the importance of the individual edges. The idea is that edges \emph{in} a highly connected component (e.g. clique, wheel) are not very important from the information flow viewpoint, while edges that \emph{connect} (almost uniquely) dense components play a significant role in the connectivity of the graph and the information flow. The importance of the ``bridge'' edge depends on the size of the components that it connects; the bigger the components the more important is the role of the edge. 

\begin{observation}
Only \method succeeds in distinguishing the importance of the edges $(P1)$ w.r.t. connectivity, while all the other methods fail at least once (Table \ref{tab:edgeImportance}).
\end{observation}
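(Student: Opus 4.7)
The plan is to verify the observation by combining direct numerical computation on each graph pair in Table~\ref{tab:edgeImportance} with structural reasoning that explains why \method succeeds and why each competitor fails.

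First, I would argue on the \method side. Using the intuition $\matB \approx \matI + \epsilon\matA + \epsilon^2\matA^2 + \ldots$, deleting a bridge edge in $B10$, $L10$, or $WhB10$ kills essentially every path of every length between the two dense components, so a large block of off-diagonal entries of $\matB$ collapses toward zero; deleting an in-component edge only slightly reduces the multiplicity of already-redundant paths. Because \matusita takes square roots before subtracting, the many small-but-consistent cross-component drops are amplified, making $sim(A,mmX) < sim(A,mX)$ and therefore $\Delta s = sim(A,mX) - sim(A,mmX) > 0$. To make this rigorous, I would apply the Sherman--Morrison formula to $[\matI+\epsilon^2\matD-\epsilon\matA]^{-1}$ to write the deletion as a low-rank update of $\matB$, and then show that the Frobenius-type quantity inside \matusita picks up an $\Omega(n)$ contribution from bridge removal versus an $O(1)$ contribution from a clique-interior removal, for the graph sizes involved.

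Second, for each competing method I would either compute the score in closed form or provide a one-line structural reason showing the relevant row of Table~\ref{tab:edgeImportance} is non-positive. For \myscshape{VEO} and \myscshape{GED}(XOR), both $mX$ and $mmX$ differ from $X$ by exactly one edge, so $sim(A,mX) = sim(A,mmX)$ and $\Delta s = 0$ identically; this is immediate from their definitions. For \myscshape{SS}, the SimHash-based signature depends only on the multiset of edges (and their weights), so again the two single-edge perturbations produce essentially identical hashes, yielding $|\Delta s| \le 10^{-5}$ in either direction. For the three $\lambda$-distance variants, I would compute the spectra of $\matA$, $\matL$, and the normalized Laplacian on the small fixed graphs $B10$, $L10$, $WhB10$, $m_{2}WhB10$ (these have at most $\sim 20$ nodes, so the eigenvalues are either available in closed form via block structure or easy to compute numerically) and exhibit one row per variant where $\Delta d < 0$.

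The main obstacle is the \method direction, because the Matusita norm applied to the inverse of a perturbed $M$-matrix does not admit an elementary closed form. I would handle this by decomposing $\matB$ into the block structure induced by the two dense components of the barbell/lollipop and using the Sherman--Morrison update to show that (a) bridge deletion flips the sign structure of the off-diagonal block, producing $\Theta(n^2)$ terms whose square-root contributions do not cancel, while (b) an in-clique deletion only reshuffles weight among already-positive entries and is controlled by a uniform bound coming from the clique's automorphism group acting on $\matB$. The remaining rows of Table~\ref{tab:edgeImportance} are verified by plugging the explicit small adjacency matrices into Eq.~(\ref{eq:Smatrix}) and Eq.~(\ref{eq:matusita}); these are one-shot numerical checks on graphs with $n=10$.
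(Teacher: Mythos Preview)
The paper treats this statement as a purely empirical observation: its entire justification is the numerical content of Table~\ref{tab:edgeImportance}, obtained by running each method on the fixed small graphs in Fig.~\ref{fig:BPtests} and tabulating $\Delta s$ or $\Delta d$. No proof accompanies the observation in the body. The only theoretical argument for P1 appears separately in the appendix, and even there only a single special case (the barbell) is handled, by truncating the Neumann series at order~$\epsilon^2$, writing out the entries of $\matB_A,\matB_B,\matB_C$ by hand, and bounding $d(A,C)^2-d(A,B)^2$ from below.

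Your plan is correct but goes well beyond what the paper does. The closed-form reasoning for \myscshape{VEO} and \myscshape{GED} (one edge removed in either case $\Rightarrow$ identical overlap counts $\Rightarrow \Delta s=0$) is exactly right and more explanatory than the bare table entry; your handling of \myscshape{SS} and the $\lambda$-distances via direct computation on the $n\le 20$ graphs is precisely what the paper did. On the \method side, your Sherman--Morrison/block-structure route is a genuine strengthening of the appendix argument, which works only with the second-order truncation $\matB\approx\matI+\epsilon\matA+\epsilon^2(\matA^2-\matD)$ and only for the barbell row. That buys you coverage of all four rows and the exact inverse rather than an approximation, but the $\Omega(n)$-versus-$O(1)$ asymptotic framing is slightly mismatched to a table with $n=10$; you would still need the explicit constants, which is what your final paragraph's one-shot numerical checks supply anyway. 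In short, the paper regards the table itself as the evidence; your proposal reproduces that evidence and adds structural explanation the paper does not attempt.
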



\paragraph{P2. Weight Awareness}:
\emph{``The absence of an edge of big weight is more important than the absence of a smaller weighted edge; this should be reflected in the similarity measure.''}

The weight of an edge defines the strength of the connection between two nodes, and, in this sense, can be viewed as a feature that relates to the importance of the edge in the graph. For this property, we study the weighted versions of the barbell graph, where we assume that all the edges except the ``bridge'' have unit weight. 

\reminder{maybe add some more results?}

\begin{observation}
All the methods are weight-aware $(P2)$, except VEO and GED which compute just the overlap in edges and vertices between the graphs (Table \ref{tab:weightAwareness}).
\end{observation}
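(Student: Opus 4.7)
The plan is to verify Observation 2 by direct inspection of each method's defining formula, with Table~\ref{tab:weightAwareness} providing numerical corroboration. The argument splits into two parts: show that VEO and GED are literally invariant under edge-weight changes (so they cannot possibly satisfy P2), and then show that the remaining methods depend on the weighted matrix representation of the graph in such a way that they react to weight changes in the direction P2 demands.

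For VEO and GED, observe that both formulas depend on the input graphs only through the cardinalities $|\nodeSet_i|$, $|\edgeSet_i|$, $|\nodeSet_1 \cap \nodeSet_2|$, and $|\edgeSet_1 \cap \edgeSet_2|$. A weighted edge $(u,v,w_1)$ and the same edge carrying weight $w_2$ are the same set element, so none of these cardinalities moves under a pure re-weighting. Both scores are therefore constant on any one-parameter family of re-weighted graphs, which already refutes P2; the highlighted VEO and GED entries in Table~\ref{tab:weightAwareness} realize this failure numerically on the barbell test cases.

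For the remaining methods I would argue case by case. For \method, equation~(\ref{eq:Smatrix}) uses the weighted adjacency $\matA$ and the weighted degree matrix $\matD$, both smooth in every edge weight. Differentiating $\matB = [\matI + \epsilon^2 \matD - \epsilon \matA]^{-1}$ with respect to a single weight yields, via Sherman--Morrison--Woodbury, a nonzero rank-two perturbation of $\matB$ whose Frobenius norm grows with that weight; composing with the monotonicity of \matusita in the square-rooted affinities shows that heavier removed edges produce strictly larger distances. For SS, SimHash projects the weighted feature vector, so the expected fraction of flipped signature bits grows with the perturbed weight. For each $\lambda$-distance variant, re-weighting one edge is a rank-two perturbation of the underlying matrix whose norm is proportional to the weight change, so the spectrum shifts and $d_\lambda$ becomes sensitive in the required direction.

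The main obstacle is the \emph{strict monotonicity} demanded by P2 (``bigger weight implies greater impact''), rather than mere non-invariance: Weyl's inequality only yields a two-sided bound on eigenvalue shifts, so a genericity argument or a direct spectral computation is needed to exclude accidental cancellation for the $\lambda$-distance methods, and SS needs a small probabilistic bound on signature bit flips. The cleanest case is \method, where the perturbation of $\matB$ is an explicit rational function of the single affected weight. Table~\ref{tab:weightAwareness}, with non-negative $\Delta s$ or $\Delta d$ in every non-VEO, non-GED column, provides the concrete empirical check that no such pathological cancellation occurs on the barbell instances used as witnesses.
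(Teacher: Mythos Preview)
The paper offers no proof of this statement at all: it is labeled an \emph{Observation} precisely because it is an empirical reading of Table~\ref{tab:weightAwareness}. The surrounding text simply states the property, points to the table, and records the observation; there is no formula-level analysis of any method.

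Your proposal therefore goes well beyond the paper. Your argument that VEO and GED depend only on the set-theoretic quantities $|\nodeSet_i|$, $|\edgeSet_i|$, $|\nodeSet_1\cap\nodeSet_2|$, $|\edgeSet_1\cap\edgeSet_2|$, and hence are blind to re-weighting, is a clean and correct explanation that the paper never spells out (it only says they ``compute just the overlap in edges and vertices''). For the remaining methods, your perturbation sketches (Sherman--Morrison for $\matB$, rank-two spectral perturbation for the $\lambda$-distances, expected bit-flip count for SS) are plausible heuristics but, as you yourself flag, do not establish the strict monotonicity P2 asks for; you ultimately fall back on Table~\ref{tab:weightAwareness} to rule out cancellation on the concrete barbell instances. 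That fallback is exactly, and only, what the paper does. So: your VEO/GED half is a genuine strengthening; your other half reduces to the paper's empirical claim, dressed with motivation the paper omits. There is no gap relative to the paper, because the paper's own ``proof'' is just the table.
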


\hide{
\begin{figure}
 \centering
   \includegraphics[width=\columnwidth, trim=0.5cm 5cm 0.3cm 0.3cm]{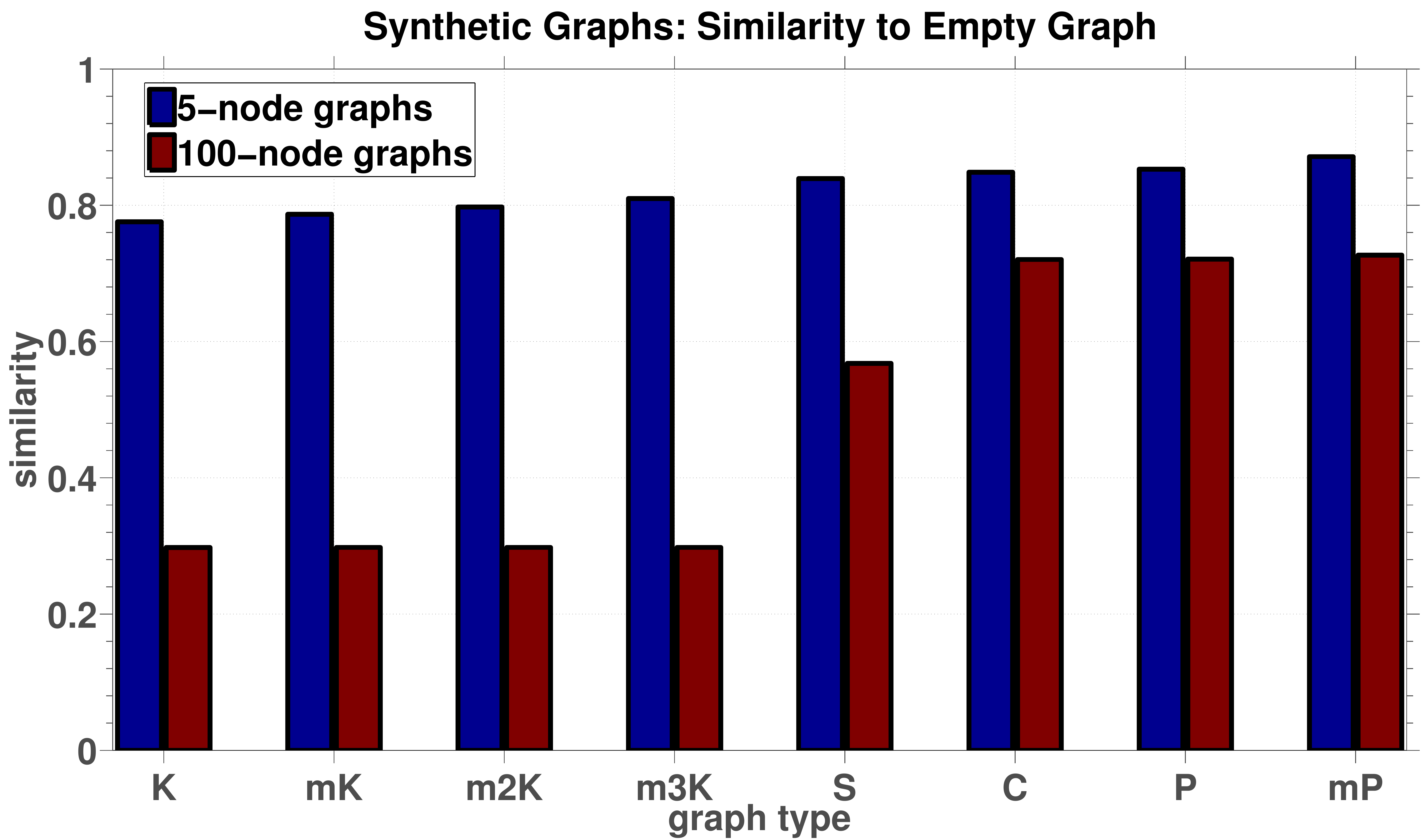}
    \label{fig:fig:emptyGraph}
    \captionof{figure}{\methodSlow - The similarity scores of graphs (\emph{K: clique, S: star, C: cycle, P: path}) to the empty graph are very intuitive.}
\end{figure}
}

\paragraph{P3. ``Edge-Submodularity''}:
%
\emph{``Let $A(\nodeSet,\edgeSet_{1})$ and $B(\nodeSet,\edgeSet_{2})$ be two graphs with the same node set, and $|\edgeSet_{1}|>|\edgeSet_{2}|$ edges. Also, assume that $m_{x}A(\nodeSet, \edgeSet_{1})$ and $m_{x}B(\nodeSet, \edgeSet_{2})$ are the respective derived graphs after removing $x$ edges. We expect that $sim(A, m_{x}A) > sim(B, m_{x}B)$, since the fewer the edges in a constant-sized graph, the more ``important'' they are.''}

The results for different graph topologies and 1 or 10 removed edges (prefixes 'm' and 'm10' respectively) are given compactly in Table \ref{tab:submodularity}. Recall that non-positive values denote violation of the ``edge-submodularity'' property. 

\begin{observation}
Only \method 
complies to the ``edge-submodularity'' property $(P3)$ in all cases examined.
\end{observation}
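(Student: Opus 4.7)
The plan is to verify the Observation by a case-by-case reading of Table~\ref{tab:submodularity}, supplemented by a qualitative argument from the attenuating-neighbor expansion of Section~\ref{sec:proposed}. The Observation has two parts: (i) the quantity $\Delta s = sim(A,B) - sim(C,D)$ is strictly positive for \method (and \methodSlow) in every row of the table, and (ii) for each of the six competing measures there is at least one row with a non-positive $\Delta s$ or $\Delta d$. Part (ii) is pure bookkeeping: I would simply point to the yellow-highlighted cells in the VEO, SS, GED, and three $\lambda$-distance columns. Part (i) is likewise verified directly from the DC and DC$_0$ columns, both of which are shown in bold, but deserves a short argument for why we should \emph{expect} the inequality to hold in general, and not merely in these nine tabulated rows.

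For this expectation, I would appeal to Intuition~\ref{obs:neighboringInfluence}: the affinity matrix satisfies $\matB \approx \matI + \epsilon \matA + \epsilon^2 \matA^2 + \cdots$, so each entry $s_{ij}$ is a discounted sum over walks from $i$ to $j$. Deleting a single edge $(u,v)$ removes a contribution from every $\matA^k$ in which $(u,v)$ participates. In a denser equally-sized graph many alternative $i$-to-$j$ walks of each length survive, so the relative change to each $s_{ij}$ is small; in a sparser graph the deleted edge accounts for a larger fraction of the surviving walks, so the change to $\matB$ is larger. Propagating this bigger per-entry change through the \matusita distance --- which is a monotone aggregator in $|\sqrt{s_{1,ij}} - \sqrt{s_{2,ij}}|$ --- and then through the monotone map $d \mapsto 1/(1+d)$, yields $sim(A, m_xA) > sim(B, m_xB)$ whenever $A$ has strictly more edges than $B$ on the same node set.

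The main obstacle is turning this path-counting heuristic into a clean inequality, because the $\epsilon^2 \matD$ term in Eq.~(\ref{eq:Smatrix}) couples each node's degree to its own affinity and an edge deletion simultaneously perturbs both $\matA$ and $\matD$. I would handle this by a first-order perturbation of $\matB = [\matI + \epsilon^2 \matD - \epsilon \matA]^{-1}$ under the rank-two symmetric edit corresponding to removing one edge, using the Sherman--Morrison--Woodbury identity to express $\Delta \matB$ exactly, and then lower-bounding the squared \matusita distance in terms of $\|\Delta \matA\|_F^2$ divided by a polynomial in the spectral radius (a surrogate for density) of $\matA$. For the specific topologies actually tabulated ($K_n, C_n, P_n, S_n$ with $n=5, 100$), one can instead compute $\matB$ in closed form from the known eigenstructure of circulant, path, and star graphs and verify the inequality symbolically; this side-steps the perturbation argument entirely, at the cost of not generalizing beyond these topologies. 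Finally, because the table actually reports \method with $g=5$ (not \methodSlow), I would note that the group-reduction step of Algorithm~\ref{alg:fast} preserves the ordering whenever the underlying \methodSlow inequality is strict and the random partition is the same across the two compared pairs, so the bold DC entries in Table~\ref{tab:submodularity} are consistent with the bold DC$_0$ entries as a consistency check.
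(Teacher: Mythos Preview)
Your proposal is correct in its core but substantially over-engineered relative to what the paper actually does. In the paper, this Observation is supported \emph{purely empirically}: the authors simply point to Table~\ref{tab:submodularity}, note that the DC$_0$ and DC columns are strictly positive in every row (hence bold), and that every competing column contains at least one highlighted non-positive entry. That is the entire argument. Your parts (i) and (ii) --- the ``pure bookkeeping'' reading of the table --- reproduce this exactly and suffice on their own.

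Everything from the attenuating-neighbor heuristic onward goes well beyond the paper. In Appendix~\ref{sec:properties} the authors prove only P1, and only for a special barbell case; they state explicitly that ``generalizing this proof, as well as the proofs of the rest properties is theoretically interesting and remains future work.'' So your Sherman--Morrison--Woodbury perturbation sketch and the closed-form eigenstructure computations for $K_n, C_n, P_n, S_n$ are original contributions, not reconstructions of anything in the paper. They are plausible directions, but you correctly flag that the $\epsilon^2\matD$ coupling is the obstacle, and the bound you describe (\matusita distance controlled by $\|\Delta\matA\|_F^2$ over a spectral-radius polynomial) is stated rather than derived.

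One genuine gap: your final claim that ``the group-reduction step of Algorithm~\ref{alg:fast} preserves the ordering whenever the underlying \methodSlow inequality is strict and the random partition is the same'' is not supported by anything in the paper. Theorem~\ref{thm:upperBound} gives only $sim_{DC_0}\le sim_{DC}$, a one-sided bound on each score individually; it says nothing about preservation of \emph{differences} $sim(A,B)-sim(C,D)$ across the reduction. The consistency of the bold DC and DC$_0$ columns in Table~\ref{tab:submodularity} is therefore an empirical finding, not a consequence of the theorem, and you should not present it as one.
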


\paragraph{P4. Focus Awareness}:
%
At this point, all the competing methods have failed in satisfying at least one of the desired properties. 
To test whether \method is able to distinguish the extent of a change in a graph, we analyze real datasets with up to 11 million edges (Table \ref{tab:datasets}) for two different types of changes. For each graph we create corrupted instances by removing: (i) edges from the original graph randomly, and (ii) the same number of edges in a targeted way (we randomly choose nodes and remove all their edges, until we have removed the appropriate fraction of edges). 

In Fig.~\ref{fig:real_sim_randVStarg}, for each of the 4 real networks -Email EU, Enron, Google web and AS Skitter-, we give the pair (sim\_\method random, sim\_\method targeted) for each of the different levels of corruption (10\%, 20\%, \ldots, 80\%). That is, for each corruption level and network, there is a point with coordinates the similarity score between the original graph and the corrupted graph when the edge removal is random, and the score when the edge removal is targeted. 
The line $y=x$ corresponds to equal similarity scores for both ways of removing edges. 
%
\begin{observation}

\begin{itemize*}
\item \emph{``Targeted changes hurt more.''} \method is focus-aware $(P4)$. Removal of edges in a targeted way leads to smaller similarity of the derived graph to the original one than removal of the same number of edges in a random way.
\item \emph{``More changes: random $\approx$ targeted.''} 
As the corruption level increases, the similarity score for random changes tends to the similarity score for targeted changes (in Fig.~\ref{fig:real_sim_randVStarg}, all lines converge to the $y=x$ line for greater level of corruption). 
\end{itemize*} 
\end{observation}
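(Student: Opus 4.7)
The statement has two parts, both read off empirically from Figure~\ref{fig:real_sim_randVStarg}. I would treat them as claims about the \emph{expected} \method similarity under the two edge-removal schemes and argue them analytically via a perturbation analysis of the affinity matrix $\matB = [\matI + \epsilon^2 \matD - \epsilon \matA]^{-1}$. The plan is to expand $\matB$ as a Neumann series, compare how many short paths each scheme destroys for a fixed budget of $r$ removed edges, and then exploit the concavity of the square root used in \matusita.

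First I would write $\matA_2 = \matA_1 - \Delta$, where $\Delta$ is the symmetric $0/1$ matrix of removed edges (in both schemes $\|\Delta\|_1 = 2r$). Using Intuition~1, $\matB \approx \matI + \epsilon\matA + \epsilon^2\matA^2 + \cdots$, so the leading terms of $\matB_1 - \matB_2$ count short $(i,j)$-paths that traverse at least one removed edge. Under \emph{random} removal, each edge is deleted independently with probability $r/m$, so a fixed length-$k$ path survives with probability $(1-r/m)^k$ and the expected destroyed path mass grows only linearly in $r$. Under \emph{targeted} removal, $\Delta$ contains all edges incident to a small set $T$ of nodes, so \emph{every} path through any $v\in T$ is destroyed; the length-$k$ destruction from $T$ scales like $\sum_{v\in T}d(v)^{k-1}$, which for high-degree $v$ strictly exceeds the random-scheme expectation at the same $r$.

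The second ingredient is concavity: since $\sqrt{\cdot}$ is concave, $c \mapsto (\sqrt{a}-\sqrt{a-c})^2$ is convex and increasing, so for a fixed total affinity loss $\sum_{ij} c_{ij}$, concentrating it into fewer entries strictly increases $\sum_{ij}(\sqrt{s_{1,ij}}-\sqrt{s_{2,ij}})^2$. Targeted removal produces exactly this concentration (large changes confined to the rows and columns indexed by $T$), hence a strictly larger $d=\matusita(\matB_1,\matB_2)$ and a strictly smaller $sim=1/(1+d)$, establishing part~(i). For part~(ii), as $r\to m$ both corrupted graphs converge in adjacency to the empty graph $E_\n$; by continuity of matrix inversion and of \matusita, both similarities converge to $sim(G_1, E_\n)$, so the gap between them must shrink, matching the convergence to the diagonal $y=x$ in the figure.

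The main obstacle will be making the concavity argument in part~(i) fully rigorous: the two schemes perturb different entries of $\matB$, so a Jensen-style bound requires coupling them on a common ``destroyed path mass'' quantity and then invoking a majorization-type inequality to compare the resulting \matusita distances. A secondary technical issue is controlling the tail of the Neumann expansion, but this is mild because the choice $\epsilon<1/(1+\max_i d_{ii})$ guarantees $\|\epsilon\matA\|<1$ and the series decays geometrically, so the leading-order path-counting argument dominates.
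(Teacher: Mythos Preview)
The paper does not prove this statement at all: Observation~4 is purely empirical, supported only by the experimental plot in Figure~\ref{fig:real_sim_randVStarg}. There is no accompanying derivation, lemma, or perturbation analysis in the paper; the authors simply read the two bullets off the data (all points above the diagonal; lines drift toward $y=x$ as corruption increases). So your proposal is not a competing proof but an attempt to supply a theoretical explanation the paper never gives.

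As an explanation, your outline is reasonable in spirit but has two real gaps. First, your path-counting step leans on ``for high-degree $v$'', yet in the paper's experiment the targeted scheme selects nodes \emph{uniformly at random} and strips all their edges; it does not preferentially pick hubs. The separation you want therefore cannot come from $\sum_{v\in T} d(v)^{k-1}$ being large because $d(v)$ is large. What actually distinguishes the two schemes at fixed $r$ is the \emph{concentration} of $\Delta$ into whole rows/columns versus being scattered, and you would need to run the comparison under that hypothesis alone, without the degree boost. Second, the step you flag yourself is the genuine obstruction: the convexity of $c\mapsto(\sqrt{a}-\sqrt{a-c})^2$ lets you compare two loss vectors on the \emph{same} index set via majorization, but random and targeted removal perturb different entries of $\matB$ by different amounts, so there is no common vector to majorize. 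Turning the intuition into an inequality would require either (i) showing the targeted perturbation of $\matB$ majorizes the random one entrywise after a suitable coupling, or (ii) bounding both \matusita distances by functionals of a scalar ``destroyed short-path mass'' and comparing those scalars. Neither is immediate.

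Your argument for part~(ii) via continuity as $r\to m$ is clean and correct as an asymptotic statement, and it is more than the paper offers; just note that it explains convergence of the endpoints, not the monotone shrinking of the gap at the intermediate corruption levels the figure actually shows.
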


%
\noindent This is expected as the random and targeted edge removal tend to be equivalent when a significant fraction of edges is deleted.

\mypar{General Remarks}
All in all, the baseline methods have several non-desirable properties. The spectral methods, as well as SS fail to comply to the ``edge importance'' (P1) and ``edge submodularity'' (P3) properties. Moreover, $\lambda$-distance has high computational cost when the whole graph spectrum is computed, cannot
distinguish the differences between co-spectral graphs, and sometimes small changes lead to big differences in the graph spectra. As far as VEO and GED are concerned, they are oblivious on significant structural properties of the graphs; thus, despite their straightforwardness and fast computation, they fail to discern various changes in the graphs.
On the other hand, \method gives tangible similarity scores and conforms to all the desired properties.

\hide{In order to see how our graph similarity algorithm fares with weighted graphs, we report some results in graphs with weights in \ref{sec:simResults}.}

\hide{\paragraph{Name conventions for generated graphs} We use the following notation: 
\begin{itemize*}
\item The capital letter in the name of the graph describes its structure; ``K'' clique or complete graph, ``P'' path, ``C'' cycle/circle, ``S'' star, ``L'' lollipop, ``B'' barbel graph. 
\item The number following the capital letter refers to the number of the nodes in the graph (and in the case of ``lollipop'' and barbel graphs, it describes the number of nodes in their clique component; e.g., K5 describes a complete graph with 5 nodes. 
\item The prefix ``m'' informs us about the number of edges that are missing from the graph (if no number is following the prefix ``m'', then only 1 edge is missing). 
\item We use ``mm'' to show that a ``bridge'' edge has been removed, and a disconnected graph is created. 
\item The prefix ``w'' is used only for the barbel graphs, and it refers to the weight of the ``bridge'' edge.
\end{itemize*}
}

\subsection{Scalability of \methodT.}
\label{subsect:scalability}
In Section~\ref{sec:proposed} we demonstrated that \method is linear on the number of edges, and here we show that this also holds in practice.
%
We ran \method on Kronecker graphs (Table \ref{tab:datasets}), which are known \cite{LeskovecCKF05} to share many properties with real graphs.
\begin{observation}
As shown in Fig.~\ref{fig:scalability}, \method scales linearly with the number of edges in the graph.
\end{observation}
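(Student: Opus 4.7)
The theoretical content is already in hand: Lemma \ref{lem:complexity} establishes worst-case complexity $O(g \cdot \max\{m_1, m_2\})$, so for a fixed number of groups $g$ the runtime is linear in the number of edges. The plan is therefore to confirm this bound empirically on graphs spanning several orders of magnitude in size, and to check that no hidden constants or implementation artefacts spoil the prediction at scale.

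For this I would use the six Kronecker instances in Table \ref{tab:datasets}, whose edge counts range from roughly $6.5\cdot 10^4$ to $6.7\cdot 10^7$. Kronecker graphs are a convenient benchmark because they mimic the heavy-tailed degree distributions, small diameters, and community structure of real networks \cite{LeskovecCKF05}, so scaling behaviour observed on them is a credible proxy for real workloads. For each instance I would fix $g$ (say $g=5$, matching the earlier experiments), compare the graph to a slightly perturbed copy of itself via Algorithm \ref{alg:fast}, and record wall-clock time, averaging several runs to reduce noise. The inner linear solves from Lemma \ref{lem:reduced} would be implemented with a sparse iterative method (e.g.\ conjugate gradient), since each coefficient matrix $\matI + \epsilon^{2}\matD - \epsilon\matA$ is symmetric positive definite with $O(m)$ non-zeros, so one iteration costs $O(m)$ and the number of iterations needed to reach a target residual is essentially independent of $n$.

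Finally, I would plot runtime against number of edges on log-log axes and fit a line; a slope close to one would confirm the $O(m)$ prediction, and the intercept would quantify the practical constant (which should be small enough to let the 1.6M-node instance finish in roughly a few minutes, as claimed earlier in Section \ref{sec:fast}). The main obstacle I anticipate is disentangling second-order implementation effects — cache behaviour, sparse-solver warm-up, and I/O for the largest instance — from the true asymptotic scaling, since these can produce localised deviations from a perfectly straight line. To guard against mis-reading the fit, I would report raw timings alongside the fitted slope, drop the smallest one or two sizes where per-call overhead dominates, and separately vary $g$ at fixed $m$ to verify the predicted linear dependence on the number of groups as well.
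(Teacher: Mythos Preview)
Your proposal is essentially the paper's own approach: cite the theoretical bound (Lemma~\ref{lem:complexity}), time \method on the six Kronecker graphs of Table~\ref{tab:datasets}, and read off linear scaling from a log--log runtime-vs-edges plot. If anything you are more thorough than the paper, which simply presents the plot without the averaging, slope-fitting, or separate $g$-sweep you describe.
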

Notice that the algorithm can be trivially parallelized by finding the node affinity scores of the two graphs in parallel instead of sequential. Moreover, for each graph the computation of the similarity scores of the nodes to each of the $g$ groups can be parallelized. However, the runtime of our experiments refer to the sequential implementation.
%
%
%
\begin{figure}[th!]
\centering
   \includegraphics[width=0.6\columnwidth]{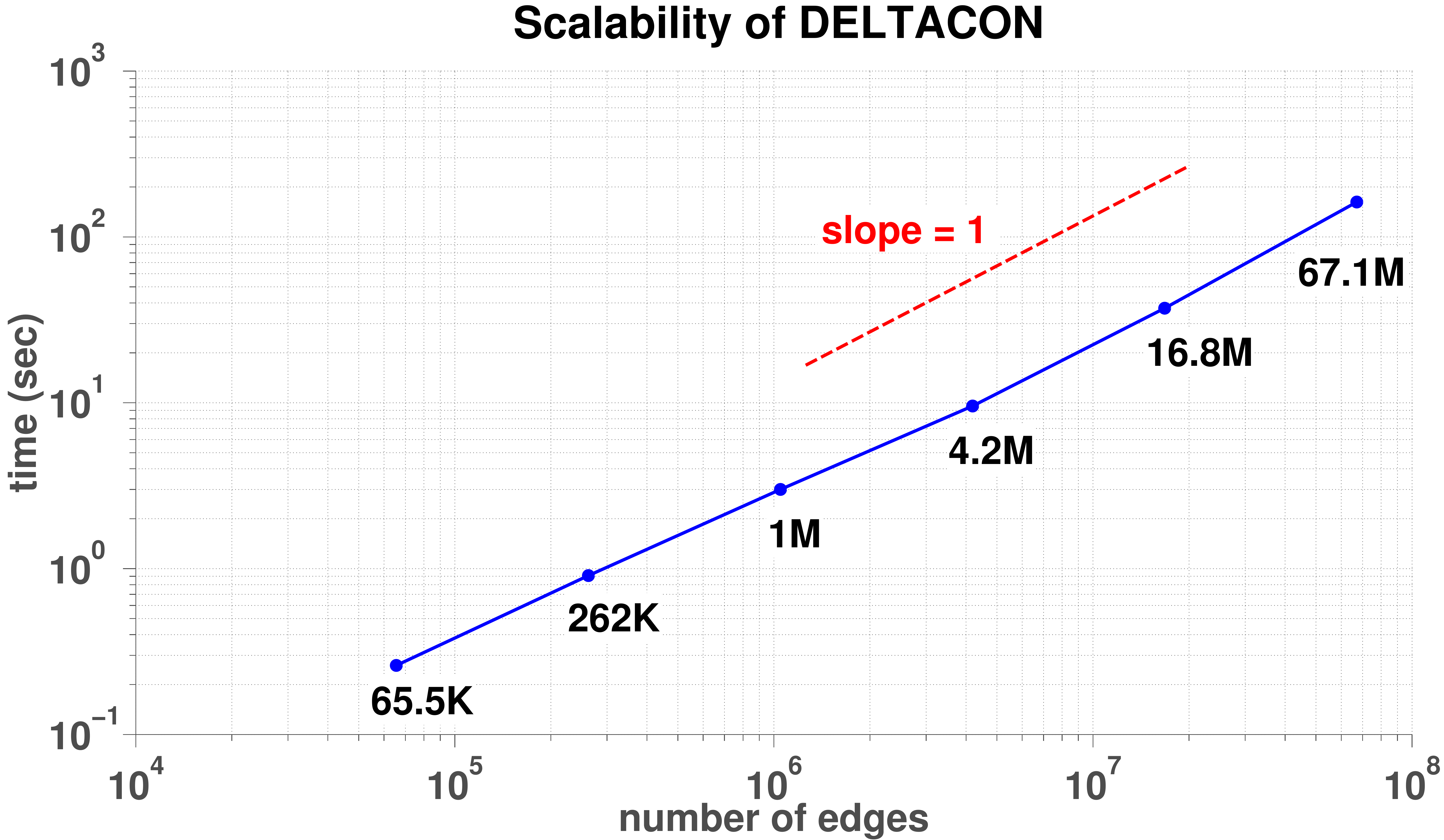}
\caption{\method is linear on the number of edges (time in sec. vs.~number of edges). The exact number of edges is annotated.}
\label{fig:scalability}
\end{figure}
\vspace{-0.3cm}
\section{\methodT at Work}
\label{sec:applications}
%
In this section we present two applications of graph similarity measures; we use \method and report our findings.
\begin{figure}
\centering
\includegraphics[width=0.65\columnwidth]{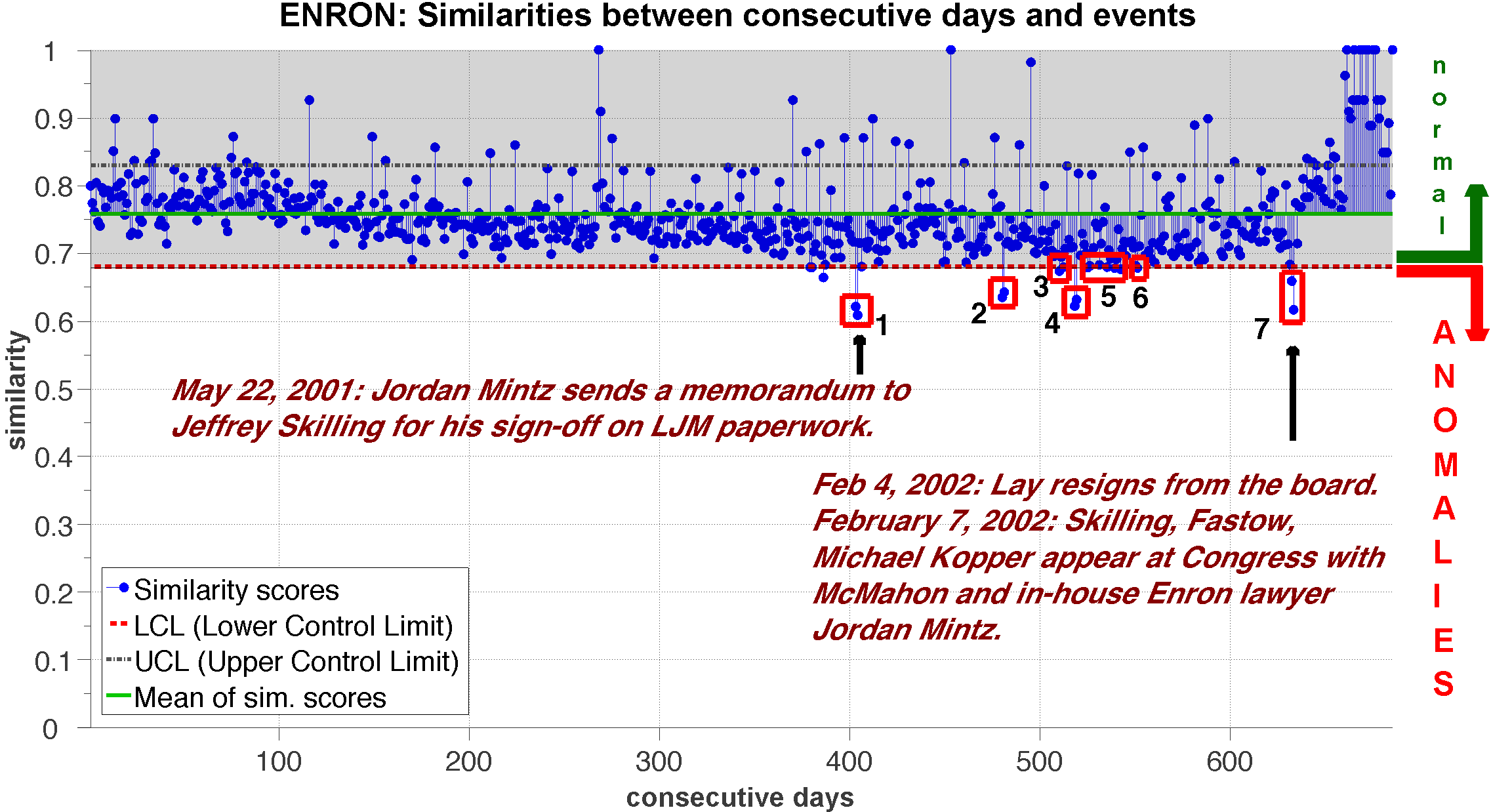}
\caption{Graph Anomaly Detection with \method. The marked days correspond to anomalies and coincide with major events in the history of Enron. The blue points are similarity scores between consecutive instances of the daily email activity between the employees, and the marked days are $3\sigma$ units away from the median similarity score. }
\label{fig:enron_timeline}
\vspace{-0.5cm}
\end{figure} 
\subsection{Enron.} First, we analyze the time-evolving ENRON graph. Figure \ref{fig:enron_timeline} depicts the similarity scores between consecutive daily who-emailed-whom graphs. By applying Quality Control with Individual Moving Range, 
we obtain the lower and upper limits of the in-control similarity scores. These limits correspond to median $\pm 3\sigma$ (The median is used instead of the mean, since appropriate hypothesis tests demonstrate that the data does not follow the normal distribution. Moving range mean is used to estimate $\sigma$.). 
Using this method we were able to define the threshold (lower control limit) below which the corresponding days are anomalous, i.e. they differ ``too much'' from the previous and following days. 
Note that all the anomalous days relate to crucial events in the company's history during 2001 (points marked with red boxes in Fig.~\ref{fig:enron_timeline}): 
(2) 8/21, Lay emails all employees stating he wants ``to restore investor confidence in Enron.'';
 (3) 9/26, Lay tells employees that the accounting practices are ``legal and totally appropriate'', and that the stock is ``an incredible bargain.''; 
 (4) 10/5, Just before Arthur Andersen hired Davis Polk \& Wardwell law firm to prepare a defense for the company; 
 (5) 10/24-25, Jeff McMahon takes over as CFO. Email to all employees states that all the pertinent documents should be preserved; 
 (6) 11/8,  Enron announces it overstated profits by 586 million dollars over 5 years.
%
%

Although high similarities between consecutive days do not consist anomalies, we found that mostly weekends expose high similarities. For instance, the first two points of 100\% similarity correspond to the weekend before Christmas in 2000 and a weekend in July, when only two employees sent emails to each other. It is noticeable that after February 2002 many consecutive days are very similar; this happens because, after the collapse of Enron, the email exchange activity was rather low and often between certain employees.
%
%

\subsection{Brain Connectivity Graph Clustering.} We also use \method for clustering and classification. For this purpose we study \emph{conectomes} -brain graphs-, which are obtained by Multimodal Magnetic Resonance Imaging \cite{GrayVLPV12}. 

In total we study the connectomes of 114 people; each consists of 70 cortical regions (nodes), and connections (weighted edges) between them. We ignore the strength of connections and derive an undirected, unweighted brain graph per person. In addition to the connectomes, we have attributes for each person (e.g., age, gender, IQ).

We first get the \method pairwise similarities between the brain graphs, and then perform hierarchical clustering using Ward's method (Fig.~\ref{fig:clustering}). As shown in the figure, there are two clearly separable groups of brain graphs. Applying t-test on the available attributes for the two groups created by the clusters, we found that the latter differ significantly (p-value=0.0057) in the Composite Creativity Index (CCI), which is related to the person's performance on a series of creativity tasks. Moreover, the two groups correspond to low and high openness index (p-value=0.0558), one of the ``Big Five Factors''; that is, the brain connectivity is different in people that are inventive and people that are consistent. Exploiting analysis of variance (ANOVA: generalization of t-test when more than 2 groups are analyzed), we tested whether the various clusters that we obtain from hierarchical clustering reflect the structural differences in the brain graphs. However, in the dataset we studied there is no sufficient statistical evidence that age, gender, IQ etc. are related to the brain connectivity.

\section{Related Work}
\label{sec:related}
%

\textbf{Graph Similarity.} The problems are divided in two main categories: 
\emph{(1) With Known Node Correspondence.}
Papadimitriou et al. \cite{PapadimitriouGM08} propose 5 similarity measures for directed web graphs. Among them the best is the Signature Similarity (SS), which is based on the SimHash algorithm, while the Vertex/Edge Overlap similarity (VEO) performs very well. Bunke \cite{Bunke06} presents techniques used to track sudden changes in communications networks for performance monitoring. The best approaches are the Graph Edit Distance and Maximum Common Subgraph. Both are NP-complete, but the former approach can be simplified given the application and it becomes linear on the number of nodes and edges in the graphs.
\emph{(2) With Unknown Node Correspondence.} Two approaches can be used: (a) feature extraction and similarity computation, (b) graph matching and application of techniques from the first category \cite{VogelsteinP11}, (c) graph kernels \cite{VishwanathanSKB10}. 
The research directions in this category include: $\lambda$-distance (\cite{Bunke06},\cite{Peabody03},\cite{WilsonZ08}), a spectral method that has been studied thoroughly; algebraic connectivity \cite{Fiedler73}; an SVM-based approach on global feature vectors \cite{LiSYZ11}; social networks similarity \cite{MacindoeR10}; computing edge curvatures under heat kernel embedding \cite{ElghawalbyH08}; comparison of the number of spanning trees \cite{Kelmans76}; fast random walk graph kernel \cite{KangTS12}.

Both research directions are important, but apply in different settings; if the node correspondence is available, the algorithms that make use of it can perform only better than methods that omit it. Here we tackle the former problem.

\textbf{Node affinity algorithms.} There are numerous node affinity algorithms; Pagerank \cite{BrinP98}, Personalized Random Walks with Restarts \cite{Haveliwala03}, the electric network analogy \cite{DoyleSL84}, SimRank \cite{JehW02}, and Belief Propagation \cite{YedidiaFW03} are only some examples of the most successful techniques. Here we focus on the latter method, and specifically a fast variation \cite{KoutraKKCPF11} which is also intuitive. All the techniques have been used successfully in many tasks, such as ranking, classification, malware and fraud detection (\cite{ChauNWWF11},\cite{McGlohonBASF09}), and recommendation systems \cite{KimES11}.

\reminder{Our method intrinsically takes into account the weights of the edges and penalizes more big and/or important changes in the graphs than
smaller ones, while being scalable.}

\reminder{Other requirements? -- same number of nodes, edges, result? Maybe create a table with this info}

\vspace{-0.1cm}
\section{Conclusions}
\label{sec:conclusions}
In this work, we tackle the problem of graph similarity 
when the node correspondence is known (e.g., similarity in
time-evolving phone
networks). 
Our contributions are:
\vspace{-0.2cm}
\begin{itemize*}
\item \emph{Axioms/Properties}: we formalize the problem of graph similarity by providing axioms, and desired properties.
\item \emph{Algorithm}: We propose \method, an algorithm that is (a) \emph{principled} (axioms $A1$-$A3$, in Sec.~\ref{sec:proposed}), (b) \emph{intuitive} (properties $P1$-$P4$, in Sec.~\ref{sec:experiments}), and (c) \emph{scalable}, needing on commodity hardware  \~{}160 seconds for a graph with over 67 million edges.
\item \emph{Experiments}: We evaluate the intuitiveness of \method, and compare it to 6 state-of-the-art measures.
\item \emph{Applications}: We use \method for temporal anomaly detection (ENRON), and clustering \& classification (brain graphs).
\end{itemize*}
Future work includes parallelizing our algorithm, as well as trying to partition the graphs in a more informative way (e.g., using elimination tree) than random.
%
%
%
%
\vspace{-0.1cm}
\section{Acknowledgements}
{\footnotesize The authors would like to thank Aaditya Ramdas, Aarti Singh, Elijah Mayfield, Gary Miller, and Jilles Vreeken for  their helpful comments and suggestions.

Funding was provided by the U.S. Army Research Office (ARO) and Defense Advanced Research Projects Agency (DARPA) under Contract Number W911NF-11-C-0088. 
Research was also sponsored by the Army Research Laboratory 
   and was accomplished under Cooperative Agreement Number W911NF-09-2-0053. It was also partially supported by an IBM Faculty Award. 
   The views and conclusions contained in this document are those 
   of the authors and should not be interpreted as representing 
   the official policies, either expressed or implied, 
   of the Army Research Laboratory or the U.S. Government. 
   The U.S. Government is authorized to reproduce and 
   distribute reprints for Government purposes notwithstanding 
   any copyright notation here on.
   }

\bibliographystyle{IEEEtran}
\vspace{-0.3cm}
\bibliography{BIB/abbreviations,BIB/references}  
%
%
\appendix
\section{Appendix}
\subsection{\textbf{From Fast Belief Propagation (\fabp) to \method}} 
 \label{sec:fabpToDeltacon}
 \fabp (\cite{KoutraKKCPF11}) is a fast approximation of the loopy \bp algorithm, which is guaranteed to converge and is described by the linear equation:
 $[\matI+a\matD-\cBP\matA]\bhalf= \phihalf,$ 
where $\phihalf$ is the vector of prior scores, $\bhalf$ is the vector of final scores (beliefs), $a=4\hhalf^{2}/(1-4\hhalf^{2})$, and $\cBP=2\hhalf/(1-4\hhalf^{2})$ are small constants, and $\hhalf$ is a constant that encodes the influence between neighboring nodes. 
By (a) using the MacLaurin expansion for $1/(1-4\hhalf^{2})$ and omitting the terms of power greater than 2, (b) setting $\phihalf=\unitv_i$ and $\bhalf=\bhalf_i$, and (c) setting $\hhalf=\epsilon/2$, we obtain eq.~(\ref{eq:fabp}), the core formula of \method.

\subsection{\textbf{Connection between \fabp and Personalized \rwr}}
\label{rwrfabp}

\begin{thm}
\label{thm:rwrfabp}
The \fabp equation (\ref{eq:fabp}) can be written in the Personalized RWR-like form:
$$[\matI-(1-\cBP')\matA_{*}\matD^{-1}]\bhalf_i= \cBP'\;\y,$$
where $\cBP'=1-\ccBP$, $\y=\matA_{*}\matD^{-1}\matA^{-1}\frac{1}{\cBP'}\unitv_i$ and $\matA_{*}=\matD (\matI+\aa\matD)^{-1}\matD^{-1}\matA\matD$.
\end{thm}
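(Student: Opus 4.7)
The plan is to take the FaBP equation~(\ref{eq:fabp}) and perform a left rescaling that turns the ``self‐loop'' term $\matI+\epsilon^{2}\matD$ into the identity, then recognise the resulting coefficient matrix as $\matA_{*}\matD^{-1}$ using the fact that diagonal matrices commute with any polynomial in $\matD$. The right‐hand side is then identified with $\epsilon'\y$ by a direct substitution.

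Concretely, let $M := \matI + \aa \matD$. Equation~(\ref{eq:fabp}) reads $[M - \epsilon \matA]\bhalf_i = \unitv_i$. Since $M$ is diagonal and positive-definite, it is invertible, so left‐multiplying by $M^{-1}$ gives
\begin{equation*}
[\matI - \epsilon\, M^{-1}\matA]\,\bhalf_i \;=\; M^{-1}\unitv_i.
\end{equation*}
The key algebraic observation is that $\matD$ and $M = \matI + \aa\matD$ commute, hence $\matD M^{-1}\matD^{-1} = M^{-1}$. Applying this to the definition $\matA_{*} = \matD (\matI+\aa\matD)^{-1}\matD^{-1}\matA\,\matD$ yields
\begin{equation*}
\matA_{*}\matD^{-1} \;=\; \matD M^{-1}\matD^{-1}\matA \;=\; M^{-1}\matA,
\end{equation*}
which is exactly the coefficient appearing above. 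Substituting $\epsilon = 1-\cBP'$ we obtain
\begin{equation*}
[\matI - (1-\cBP')\,\matA_{*}\matD^{-1}]\,\bhalf_i \;=\; M^{-1}\unitv_i.
\end{equation*}

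It remains to check that $M^{-1}\unitv_i = \cBP'\y$. Plugging the stated definition of $\y$ in, $\cBP'\y = \matA_{*}\matD^{-1}\matA^{-1}\unitv_i$, and using $\matA_{*}\matD^{-1} = M^{-1}\matA$ gives $\cBP'\y = M^{-1}\matA\,\matA^{-1}\unitv_i = M^{-1}\unitv_i$, as required. (Implicitly we assume $\matA$ is invertible; if not, one can interpret $\y$ via any solution of $\matA\,\y' = (1/\cBP')\unitv_i$, or simply take the equation as a definition of the ``personalisation vector'' $\y$.)

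The only non‐routine point is spotting the identity $\matA_{*}\matD^{-1} = M^{-1}\matA$; once this is in hand, the theorem is essentially a one‐line manipulation. I would therefore present the proof by first stating and verifying this commutation identity, and then carrying out the two lines above. No bound or convergence argument is needed, since we are only rewriting a linear system in equivalent form.
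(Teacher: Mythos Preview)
Your proof is correct and follows essentially the same route as the paper: left-multiply the \fabp equation to clear the $\matI+\epsilon^{2}\matD$ block, then identify the resulting coefficient as $\matA_{*}\matD^{-1}$ and match the right-hand side. The paper performs the left-multiplication in two steps (first $\matD^{-1}$, then $\matF^{-1}$ with $\matF=\matD^{-1}+\epsilon^{2}\matI$), whereas you do it in one step with $M^{-1}=(\matI+\epsilon^{2}\matD)^{-1}$ and invoke diagonal commutativity explicitly; since $\matF^{-1}\matD^{-1}=M^{-1}$, the two derivations are algebraically identical.
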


\begin{proof}
We begin from the derived \fabp equation (\ref{eq:fabp}) and do simple linear algebra operations:\\
$[\matI+\aa\matD-\ccBP\matA]\bhalf_i= \unitv_i$ \hfill ($\times$ $\matD^{-1}$ from the left) \\
$[\matD^{-1}+\aa\matI-\ccBP\matD^{-1}\matA]\bhalf_i= \matD^{-1}\unitv_i$ \hfill ($\matF=\matD^{-1}+\aa\matI$) \\
$[\matF-\ccBP\matD^{-1}\matA]\bhalf_i= \matD^{-1}\unitv_i$ \hfill ($\times$ $\matF^{-1}$ from the left) \\
$[\matI-\ccBP\matF^{-1}\matD^{-1}\matA]\bhalf_i= \matF^{-1}\matD^{-1}\unitv_i$ 
\hfill ( $\matA_{*} = \matF^{-1}\matD^{-1}\matA\matD$)\\
$[\matI-\ccBP\matA_{*}\matD^{-1}]\bhalf_i= (1-\ccBP)\;(\matA_{*}\matD^{-1}\matA^{-1}\frac{1}{1-\ccBP}\unitv_i)$ \hfill $\square$
\end{proof}


\subsection{\textbf{Proofs for Section \ref{sec:proposedDetails}}}
\label{sec:details}

\begin{lemma}
The time complexity of \method is linear on the number of edges in the graphs, i.e. O$(g \cdot max\{\m_{1}, \m_{2} \})$.
\end{lemma}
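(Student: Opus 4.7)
The plan is to decompose the running time of Algorithm~\ref{alg:fast} step by step and show that every step is at worst $O(g \cdot \max\{m_1, m_2\})$, so their sum has the same order. The dominant cost will come from solving the $2g$ sparse linear systems in the inner loop, one per group per graph.

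First, I would dispatch the cheap overhead: the random partition into $g$ groups and the construction of the indicator vectors $\phihalf_k = \sum_{i \in \nodeSet_k} \unitv_i$ together touch each of the $n$ nodes $O(1)$ times, costing $O(n)$ in total. Since $n \le m_1 + m_2$ (ignoring isolated nodes, which can be handled trivially), this is absorbed into the target bound. Similarly, the final \matusita{} distance on the two $n \times g$ matrices $\matB'_1, \matB'_2$ is just a pass over $ng$ entries, giving $O(ng)$, again absorbed.

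The main step is the loop body, and here I would directly invoke Lemma~\ref{lem:reduced}, which already establishes that building the reduced affinity matrix $\matB'$ for a single graph is linear in the number of edges. Concretely, for each group $k$ the system $[\matI+\epsilon^{2}\matD-\epsilon\matA]\bhalf'_k = \phihalf_k$ is solved by an iterative scheme (e.g.\ Power Method / Jacobi / CG), and each iteration is a sparse matvec costing $O(m)$ since $\matI+\epsilon^2\matD-\epsilon\matA$ has $O(m)$ nonzeros. With $\epsilon = 1/(1+\max_i d_{ii})$ the matrix is strictly diagonally dominant, so the iteration count is bounded by a constant independent of $n,m$. Summing over the $g$ groups and the two graphs yields $O(g \cdot m_1) + O(g \cdot m_2) = O(g \cdot \max\{m_1, m_2\})$.

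The main obstacle I expect is the convergence-rate argument: one must justify that the number of iterations per solve is $O(1)$ rather than growing with $n$. I would handle this by noting that diagonal dominance of $\matI+\epsilon^2\matD-\epsilon\matA$ (which follows directly from the choice of $\epsilon$) bounds the spectral radius of the associated Jacobi iteration matrix by a constant strictly less than $1$, so the relative error after a constant number of iterations falls below any fixed tolerance. Once this is in place, combining the overhead bound $O(n+ng)$ with the solver bound $O(g \cdot \max\{m_1, m_2\})$ yields the claimed overall complexity, completing the proof.
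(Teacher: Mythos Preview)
Your proposal is correct and follows essentially the same step-by-step decomposition as the paper's proof: $O(n)$ for partitioning, $O(gn)$ for the \matusita{} computation, and $O(g\,m_i)$ per graph for the $g$ sparse solves via the Power Method, with the $n$ terms absorbed into the edge count. Two minor differences are worth noting: you justify the $O(1)$ iteration count explicitly via diagonal dominance (the paper simply cites \cite{KoutraKKCPF11}), and you pass directly from $O(g(m_1+m_2))$ to $O(g\cdot\max\{m_1,m_2\})$ by the trivial bound $m_1+m_2\le 2\max\{m_1,m_2\}$, whereas the paper invokes a parallel-execution assumption to reach the $\max$---your route is actually cleaner, since the asymptotic equivalence makes the parallelism assumption unnecessary for the stated bound.
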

\begin{proof}[Proof of Lemma \ref{lem:complexity}]
By using the Power Method \cite{KoutraKKCPF11}, the complexity of solving eq.~(\ref{eq:fabp}) is O$(|\edgeSet_{i}|)$ for each graph $(i=1,2)$. The node partitioning needs O$(\n)$ time; the affinity algorithm is run $g$ times in each graph, and the similarity score is computed in O$(g\n)$ time.
Therefore, the complexity of \method is 
O$((g+1)\n + g(\m_{1}+\m_{2}))$, where $g$ is a small constant. 
Unless the graphs are trees, $|\edgeSet_{i}|<\n$, so the complexity of the algorithm 
reduces to O$(g(\m_{1}+\m_{2}))$. Assuming that the affinity algorithm is run on the graphs in parallel, since there is no dependency between the computations, \method has 
complexity O$(g \cdot max\{\m_{1}, \m_{2} \}). \hfill \square$
%
\end{proof}
\begin{lemma}
\label{lem:lincomb}
The affinity score of each node to a group (computed by \method) is equal to the sum of the affinity scores of the node to each one of the nodes in the group individually (computed by \methodSlow).
\end{lemma}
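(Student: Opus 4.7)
The plan is to leverage linearity of the linear system that defines both algorithms. Let $M = \matI + \epsilon^{2}\matD - \epsilon\matA$. By equation~(\ref{eq:fabp}), the per-node affinity vector $\bhalf_i$ computed by \methodSlow satisfies $M\bhalf_i = \unitv_i$, while by the construction in Algorithm~\ref{alg:fast}, the per-group affinity vector $\bhalf_k'$ computed by \method satisfies $M\bhalf_k' = \phihalf_k$ where $\phihalf_k = \sum_{i\in\nodeSet_k}\unitv_i$.

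First I would note that $M$ is the same matrix in both equations (it depends only on the graph, not on the right-hand side), and argue that $M$ is invertible. Invertibility is standard for this \fabp operator: since $\epsilon = 1/(1+\max_i d_{ii}) < 1/\max_i d_{ii}$, the matrix $M$ is strictly diagonally dominant, hence nonsingular. This is the one ingredient that needs to be recorded, but it is a routine consequence of the choice of $\epsilon$ stated in Table~\ref{tab:definitions} and is already implicitly used when Algorithm~\ref{alg:slow} inverts $M$ via the Power Method.

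Next I would apply linearity. Summing the per-seed equations over $i \in \nodeSet_k$ gives
\begin{equation*}
M\Bigl(\sum_{i\in\nodeSet_k}\bhalf_i\Bigr) = \sum_{i\in\nodeSet_k}\unitv_i = \phihalf_k = M\bhalf_k'.
\end{equation*}
Because $M$ is invertible, the solution to $M\x = \phihalf_k$ is unique, so
\begin{equation*}
\bhalf_k' = \sum_{i\in\nodeSet_k}\bhalf_i,
\end{equation*}
which is exactly the claim (componentwise, the affinity of any node $j$ to the group $\nodeSet_k$ as output by \method equals the sum of its affinities to the individual seeds $i \in \nodeSet_k$ as output by \methodSlow).

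There is no real obstacle here; the lemma is essentially an observation about the superposition principle for linear systems. The only subtlety is making the uniqueness step explicit by citing (or briefly verifying) diagonal dominance of $M$, which I would do in a single sentence. This lemma then serves as the clean algebraic bridge needed for Theorem~\ref{thm:upperBound}, where aggregating affinity columns into groups ``blurs'' the per-seed information exactly in the sense made precise here.
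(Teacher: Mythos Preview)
Your proposal is correct and follows essentially the same approach as the paper: both define $M=\matI+\epsilon^2\matD-\epsilon\matA$, note that \methodSlow solves $M\bhalf_i=\unitv_i$ while \method solves $M\bhalf_k'=\sum_{i\in\text{group}_k}\unitv_i$, and invoke linearity to conclude $\bhalf_k'=\sum_{i\in\text{group}_k}\bhalf_i$. The only difference is that you make invertibility of $M$ explicit via diagonal dominance, whereas the paper simply appeals to ``linearity of matrix additions'' without spelling out uniqueness.
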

\begin{proof}
Let ${\mathbf B}=\matI+\epsilon^{2}\matD-\epsilon\matA$. Then \methodSlow consists of solving for every node $i \in \nodeSet$ the equation ${\mathbf B}\bhalf _{i}= \unitv_{i}$;
\method solves the equation ${\mathbf B}\bhalf_k' = \phihalf_{k}$ for all groups $k \in (0,g]$, where $\phihalf_{k} = \sum_{i \in group_{k}}\unitv_{i}$. Because of the linearity of matrix additions, it holds true that $\bhalf_{k}' = \sum_{i \in group_{k}}\bhalf_{i}$, for all groups $k$. \hfill $\square$
\end{proof}

\begin{thm}
\method's similarity score between any two graphs $G_{1}$, $G_{2}$ upper bounds the actual \methodSlow's similarity score, i.e. $sim_{DC-0}(G_{1}, G_{2}) \le sim_{DC}(G_{1}, G_{2})$.
\end{thm}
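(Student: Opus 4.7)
The plan is to reduce the claim to a per-row, per-group inequality and then recognize that inequality as a direct consequence of the Cauchy--Schwarz inequality (equivalently, the concavity of the square root). Since $sim = 1/(1+d)$ is monotonically decreasing in $d$, the claim $sim_{DC-0}(G_1,G_2) \le sim_{DC}(G_1,G_2)$ is equivalent to the distance inequality
\[
\textsc{RootED}(\matB_1,\matB_2) \;\ge\; \textsc{RootED}(\matB'_1,\matB'_2).
\]
So the whole proof reduces to comparing these two Matusita distances.

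First I would invoke Lemma~\ref{lem:lincomb} to rewrite the entries of $\matB'_\ell$ ($\ell=1,2$) in terms of the entries of $\matB_\ell$: for every node $i$ and every group $k$,
\[
s'_{\ell,ik} \;=\; \sum_{j \in \nodeSet_k} s_{\ell,ij}.
\]
Substituting this into the definition of \matusita\ in Eq.~(\ref{eq:matusita}), the desired inequality becomes
\[
\sum_{i=1}^{\n}\sum_{k=1}^{g}\sum_{j \in \nodeSet_k} \bigl(\sqrt{s_{1,ij}}-\sqrt{s_{2,ij}}\bigr)^2
\;\ge\;
\sum_{i=1}^{\n}\sum_{k=1}^{g} \Bigl(\sqrt{\textstyle\sum_{j \in \nodeSet_k} s_{1,ij}}-\sqrt{\textstyle\sum_{j \in \nodeSet_k} s_{2,ij}}\Bigr)^2.
\]
It therefore suffices to prove, for each fixed $i$ and $k$, the scalar inequality
\[
\sum_{j \in \nodeSet_k} \bigl(\sqrt{a_j}-\sqrt{b_j}\bigr)^2 \;\ge\; \Bigl(\sqrt{\textstyle\sum_{j \in \nodeSet_k} a_j}-\sqrt{\textstyle\sum_{j \in \nodeSet_k} b_j}\Bigr)^2
\]
where $a_j = s_{1,ij} \ge 0$ and $b_j = s_{2,ij} \ge 0$ (non-negativity of affinities follows from the series expansion in Intuition~1, for $\epsilon$ small enough, which is exactly the regime \fabp\ guarantees).

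Expanding both sides and cancelling the common $\sum_j a_j + \sum_j b_j$ terms, the inequality collapses to
\[
\sqrt{\Bigl(\textstyle\sum_j a_j\Bigr)\Bigl(\textstyle\sum_j b_j\Bigr)} \;\ge\; \sum_j \sqrt{a_j\,b_j},
\]
which is precisely Cauchy--Schwarz applied to the vectors $(\sqrt{a_j})_{j\in\nodeSet_k}$ and $(\sqrt{b_j})_{j\in\nodeSet_k}$. Summing the resulting per-$(i,k)$ inequalities over all $i$ and $k$ yields $\textsc{RootED}(\matB_1,\matB_2) \ge \textsc{RootED}(\matB'_1,\matB'_2)$, and the theorem follows by applying the decreasing map $d \mapsto 1/(1+d)$.

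The only potentially subtle step is the non-negativity of the $s_{ij}$, needed so that the square roots are well-defined real numbers; everything else is algebraic manipulation and one application of Cauchy--Schwarz. I would handle this either by citing the convergence/non-negativity guarantees of \fabp\ (so $\matB = (\matI + \epsilon^2\matD - \epsilon\matA)^{-1}$ has non-negative entries under the prescribed choice of $\epsilon$) or by noting the Neumann-series representation of Intuition~1, which writes $\matB$ as a convergent sum of non-negative matrices.
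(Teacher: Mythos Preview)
Your proposal is correct and follows essentially the same route as the paper: invoke Lemma~\ref{lem:lincomb} to express group affinities as sums of individual affinities, reduce to a per-node/per-group comparison, expand the squares, and apply Cauchy--Schwarz to obtain $\sum_j\sqrt{s_{1,ij}s_{2,ij}} \le \sqrt{(\sum_j s_{1,ij})(\sum_j s_{2,ij})}$. Your write-up is in fact slightly more careful than the paper's, since you make explicit both the monotonicity of $d\mapsto 1/(1+d)$ and the non-negativity of the $s_{ij}$ needed for the square roots.
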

\begin{proof}[Proof of Theorem \ref{thm:upperBound}]
Let $\matB_{1}, \matB_{2}$ be the $\n \times \n$ final-scores matrices of $G_{1}$ and $G_{2}$ by applying \methodSlow, and $\matB'_{1}, \matB'_{2}$ be the respective $\n \times g$ final-scores matrices by applying \method. We want to show that \methodSlow's distance
\begin{center}$d_{DC_0} = \sqrt{\sum_{i=1}^{\n}\sum_{j=1}^{\n}(\sqrt{s_{1,ij}}-\sqrt{s_{2,ij}})^{2}}$ \end{center}
is greater than \method's distance
\begin{center}$d_{DC} = \sqrt{\sum_{k=1}^{g}{\sum_{i=1}^{\n}{(\sqrt{s'_{1,ik}}-\sqrt{s'_{2,ik}})^{2}} }}$\end{center}
or, equivalently, that $d_{DC_0}^{2} > d_{DC}^{2}$. It is sufficient to show that for one group of \method, the corresponding summands in $d_{DC}$ are smaller than the summands in $d_{DC_0}$ that are related to the nodes that belong to the group. By extracting the terms in the squared distances that refer to one group of \method and its member nodes in \methodSlow, and by applying Lemma~\ref{lem:lincomb}, we obtain the following terms:
\begin{center}$t_{DC_0} = \sum_{i=1}^{\n}\sum_{j \in group}(\sqrt{s_{1,ij}}-\sqrt{s_{2,ij}})^{2}$\end{center}
\begin{center}$t_{DC} = \sum_{i=1}^{\n}{(\sqrt{\sum_{j \in group}{s_{1,ij}}}-\sqrt{\sum_{j \in group}s_{2,ij}})^{2}}.$\end{center}
Next we concentrate again on a selection of summands (e.g. $i=1$), we expand the squares and use the Cauchy-Schwartz inequality to show that 
\begin{center}$\sum_{j \in group}{\sqrt{s_{1,ij}s_{2,ij}}} < \sqrt{\sum_{j \in group}s_{1,ij}\sum_{j \in group}s_{2,ij}},$\end{center}
or equivalently that $t_{DC_0} > t_{DC}$. \hfill $\square$ 
\end{proof}

\subsection{\textbf{Satisfaction of the Axioms}} 
\label{sec:axioms}
Here we elaborate on the satisfiability of the axioms by \methodSlow and \method.

A1.\emph{ Identity Property}: $sim(G_1, G_1) = 1$. 

The proof is straightforward; the affinity scores are identical for both graphs.

A2.\emph{ Symmetric Property}: $sim(G_1, G_2) = sim(G_2, G_1)$.

The proof is straightforward for \methodSlow.
For the randomized algorithm, \method, it can be shown that $sim(G_1, G_2) =  sim(G_2, G_1)$ on average.

A3.\emph{ Zero Property}: $sim(G_1, G_2)  \to 0$ for $\n \to \infty$, where $G_1$ is the clique graph ($K_{\n}$), and $G_2$ is the empty graph (i.e., the edge sets are complementary). 

We restrict ourselves to a sketch of proof, since the proof is rather intricate.
\begin{proof}[(Sketch of Proof - Zero Property)]
First we show that all the nodes in a clique get final scores in $\{s_{g}, s_{ng}\}$, depending on whether they are included in group $g$ or not. Then, it can be demonstrated that the scores have finite limits, and specifically $\{s_{g}, s_{ng}\} \to \{\frac{\n}{2g}+1, \frac{\n}{2g}\}$ as $\n \to \infty$ (for finite $\frac{\n}{g})$. Given this condition, it can be directly derived that the \matusita between the $\matB$ 
matrices of the empty graph and the clique becomes arbitrarily large. So, $sim(G_1, G_2) \to 0$ for $\n \to \infty$. \hfill $\square$
\end{proof}

\subsection{\textbf{Satisfaction of the Properties}} 
\label{sec:properties}
Here we give some theoretical guarantees for the most important property, ``edge importance'' (P1). We prove the satisfiability of the property in a special case; generalizing this proof, as well as the proofs of the rest properties is theoretically interesting and remains future work.  

Special Case [Barbell graph]: Assume $A$ is a barbell graph with $n_1$ and $n_2$ nodes in each clique (e.g., $B10$ with $n_1=n_2=5$ in Fig.~\ref{fig:BPtests}), $B$ does not have one edge in the first clique, and $C$ does not have the ``bridge'' edge.
\begin{proof}
From eq. (\ref{eq:fabp}), by using the Power method we obtain the solution:
$$\bhalf_i=[\matI+(\epsilon \matA - \epsilon^2 \matD) + (\epsilon \matA - \epsilon^2 \matD)^2 + ...] \unitv_i \Rightarrow$$
$$\bhalf_i \approx [\matI+ \epsilon \matA + \epsilon^2 (\matA^2 - \matD)] \unitv_i,$$
where we ignore the terms of greater than second power.
By writing out the elements of the $\matB_A, \matB_B, \matB_C$ matrices as computed by \methodSlow and the above formula, and also the \matusita between graphs A, B and A, C, we obtain the following formula for their relevant difference:
$$d(A,C)^2-d(A,B)^2 = 2\epsilon\{\epsilon(n-f) +1 - (\frac{2\epsilon^3(n_1-2)}{c_1^2} + \frac{\epsilon}{c_2^2})\},$$
where $c_1=\sqrt{\epsilon+\epsilon^2(n_1-3)}+\sqrt{\epsilon+\epsilon^2(n_1-2)}$ and $c_2=\sqrt{\epsilon^2(n_1-2)}+\sqrt{\epsilon+\epsilon^2(n_1-2)}.$
We can show that $c_1 \geq 2 \sqrt{\epsilon}$ for $\hhalf < 1$ (which holds always) and $c_2 \geq \sqrt{\epsilon},$
where $f=3$ if the missing edge in graph $B$ is adjacent to the ``bridge'' node, and $f=2$ in any other case. So, $d(A,C)^2-d(A,B)^2 \geq 0$.

This property is not always satisfied by the euclidean distance. \hfill $\square$
\end{proof}

\end{document}